\UseRawInputEncoding

\documentclass[a4paper,11pt,reqno]{amsart}

\usepackage{amssymb, amsmath, bm}
\usepackage{mathrsfs}
\usepackage{dsfont}
\usepackage{appendix}

\parskip 1.2ex

\allowdisplaybreaks

\addtolength{\hoffset}{-0.09\textwidth}
\setlength{\textwidth}{1.18\textwidth}
\addtolength{\voffset}{-0.025\textheight}
\setlength{\textheight}{1.05\textheight}

\newtheorem{thm}{Theorem}[section]
\newtheorem{prp}[thm]{Proposition}
\newtheorem{lem}[thm]{Lemma}
\newtheorem{dfn}[thm]{Definition}

\newtheorem{cor}[thm]{Corollary}
\newtheorem{example}[thm]{Example}

\newtheorem{remark}[thm]{Remark}

\newenvironment{prf}{\noindent {\it Proof.}  }{\hfill $\Box$}

\newcommand{\pa}{\partial}

\newcommand\Z{\mathbb{Z}}

\newcommand{\res}{\mathrm{Res}}

\newcommand{\bs}[1]{\boldsymbol{#1}}

\begin{document}

\title[]{The Virasoro symmetries of the bigraded modified Toda hierarchy}
\author[]{Yi Yang$^*$}
	\thanks{*Corresponding author.}
\dedicatory {School of Mathematics, Sun Yat-sen University\\
Guangzhou, 510000, China  \\
Email address:  yangy875@mail2.sysu.edu.cn}
\begin{abstract}
Modified Toda hierarchy is  a two-component generalization of the 1st modified KP hierarchy, which has been widely applied to analyze constraints of the Toda hierarchy, including the B--Toda and C--Toda hierarchies.  In this paper, we construct additional symmetries for the modified Toda  hierarchy and derive the corresponding Adler-Shiota-van Moerbeke formula. In addition, we also show that $(N,M)$--bigraded modified Toda hierarchy(BMTH), which is a special reduction for modified Toda, possess additional symmetries that form  a subalgebra of the Virasoro algebra.
\\
\textbf{Keywords}: Modified Toda hierarchy; bigraded modified Toda hierarchy; additional symmetries; Adler-Shiota-van Moerbeke formula; Virasoro algebra.
\end{abstract}
\maketitle

\section{Introduction}
Modified Toda, also  known as the two--component 1st modified Kadomtsev--Petviashvili (KP) hierarchy, was first    introduced in \cite{van2015,Jimbo1983} as a bilinear equation satisfied by tau functions. It has been recently reformulated in terms of a Lax pair $(L_1,L_2)$ involving difference operators \cite{mtoda}.  This hierarchy has been shown to be closely related
 to  various integrable systems. As detailed in \cite{mtoda}, there exist the Miura links between modified Toda and two--dimensional Toda lattice (Toda for short) hierarchy, which is one of the most important research objects in the mathematical physics and integrable systems.  Besides this,
 some different types of constraints of Toda are sub--hierarchies of
modified Toda hierarchy, including  constrained Toda(C--Toda)\cite{Krichever2022} and Toda lattice with the constraint of type B (B--Toda)\cite{Prokofev2023,Krichever2023}.  At present,  modified Toda has been studied from various aspects, including Darboux transformation\cite{mtoda}, reduction and extensions\cite{BMTH}, among others. The main goal of this paper  is to construct additional symmetries for modified Toda hierarchy.

One of the attractive topics on hierarchies of integrable systems is to study their
additional symmetries, which can be used to derive string equations and Virasoro constraints in the matrix models of the string theory\cite{Dick-string}.  Such symmetries can be represented by the Lax equations constructed by Orlov and Schulman\cite{OS}, which involve operators explicitly dependent on space and time variables. In addition to this, there exists the Sato B\"acklund symmetry defined by the vertex operator acting on tau function\cite{DKJM}.  These two types of symmetries in fact can be connected
 through the so-called Adler-Shiota-van Moerbeke(ASvM) formula\cite{ASVM} in consideration of KP and the Toda  hierarchies.  Almost at the same time, the proof of ASvM formula of KP was presented by Dickey\cite{add-KP} using the notion of resolvent operators and the Fay identity of the KP hierarchy. In \cite{van-BKP},
 van de Leur also proposed a novel approach to obtain the corresponding ASvM formula for BKP hierarchy by using an algebraic formalism. These methods have subsequently been generalized to other integrable systems, such as BKP \cite{Tu-BKP}, CKP \cite{HTFM-BKP}, modified KP \cite{CLT-mKP}, two-component BKP \cite{Wu-2BKP}, and so on. Building on the Toda hierarchy framework, we attempt to derive additional symmetries, and give an alternative
proof of the ASvM formula for the modified Toda hierarchy based on the approach in \cite{ASVM}.

A special type of  reduction of modified Toda  is the  $(N,M)$--bigraded modified  Toda hierarchy (BMTH) labelled by positive integers $N$ and $M$, which  is  associated to affine algebra ${\rm\widetilde{sl}}_{n}$\cite{BMTH}.  This hierarchy can be derived by imposing the constrains $L_1^N=L_2^M$ and  $ (L_1^N+L_2^M)(1)=0$. Such constrains cannot make it possible for the additional symmetries of the modified Toda to be directly reduced to those of the $(N,M)$--BMTH.  However, a reduction is feasible for certain linear combinations of these symmetries. It can be shown in this paper that additional symmetries of the $(N,M)$--BMTH form a subalgebra of the Virasoro algebra with zero central charge.

The paper is structured as follows: In Section 2, some basic facts about the modified Toda hierarchy and difference operators are introduced. Then in Section 3, the additional symmetry and ASvM formula for modified Toda are discussed. After that
in Section 4,
the Virasoro symmetries of the $(N,M)$--BMTH are given.   Section 5 is devoted to conclusions and discussions.

\section{Modified Toda hierarchy and difference operators}
Let us first give a brief introduction to the modified Toda   hierarchy based on detailed research in \cite{mtoda}.
Assume $\mathcal{F}$ is a commutative associative algebra of smooth complex functions in  variable $s$.
Denote by $\mathcal{F}[[\Lambda,\Lambda^{-1}]]$ the space of  formal pseudo difference operators, which consists of all expressions of the form $$A=\sum\limits_{m\in \mathbb{Z}}a_m(s)\Lambda^m,\quad a_m(s)\in\mathcal{F},$$
where $\Lambda$ is the shift operator acting on function by
$
\Lambda (f(s))=f(s+1).$  We can define the adjoint operation $*$ by $
A^*=\sum_{m\in \mathbb{Z}}\Lambda^{-m}a_m(s),$
which obeys $(AB)^*=B^*A^*$, $f^*=f$ and $(Af)^*=fA^*$ for $f\in \mathcal{F}$ and $A,B\in \mathcal{F}[[\Lambda,\Lambda^{-1}]]$ .

Let $\mathcal{F}((\Lambda))$ and $\mathcal{F}((\Lambda^{-1}))$ be two subspaces of $\mathcal{F}[[\Lambda,\Lambda^{-1}]]$, defined by requiring the coefficients $a_i$
to vanish for $i\ll0$ and $i\gg0$ in $A$ respectively. Both $\mathcal{F}((\Lambda))$ and $\mathcal{F}((\Lambda^{-1}))$ are the associative rings, where  the multiplication is defined by
 $
(f(s)\Lambda^i)(g(s)\Lambda^j)=f(s)g(s+i)\Lambda^{i+j},
$
Introduce  operator $\Delta =\Lambda-1$ and its adjoint operator $\Delta^* =\Lambda^{-1}-1$, and define the following multiplication operations,  for any
$j\in \mathbb{Z}$
\begin{align*}
&\Delta^j\cdot f=\sum_{i=0}^\infty\binom{j}{i}(\Delta^i(f))(n+j-i)\Delta^{j-i},\quad \Delta^{*j}\cdot f=\sum_{i=0}^\infty\binom{j}{i}(\Delta^{*i}(f))(n+j-i)\Delta^{*j-i}.
\end{align*}
Hence we obtain two associative rings $\mathcal{F}((\Delta^{-1}))$ and $\mathcal{F}((\Delta^{*-1}))$.

In what follows, we define the following expansions
\begin{align}
(\Delta+1)^{-1}=\Delta^{-1}\sum_{k=0}^{\infty}(-\Delta^{-1})^k,&\quad (\Delta^*+1)^{-1}=\Delta^{*-1}\sum_{k=0}^{\infty}(-\Delta^{*-1})^{k},\label{delta-lambda}\\
(\Lambda-1)^{-1}=\sum_{k=1}^{\infty}\Lambda^{-k},&\quad
(\Lambda^{-1}-1)^{-1}=\sum_{k=1}^{\infty}\Lambda^k,\label{lambda- delta}
\end{align}
then it follows that $\mathcal{F}((\Lambda^{-1}))=\mathcal{F}((\Delta^{-1}))$  and  $\mathcal{F}((\Lambda))=\mathcal{F}((\Delta^{*-1})). $
\begin{lem}\label{lemma:relation1}\cite{mtoda}
For any  formal difference operators $A\in \mathcal{F}[[\Lambda,\Lambda^{-1}]]$, we have
\begin{align}
A_{\Delta,\geq0}=A_{\Lambda,\geq0},&\quad A_{\Delta,[0]}=A_{\Lambda,\geq0}(1),\label{ALambda>0}\\
A_{\Delta^*,\geq0}=A_{\Lambda,\leq0},&\quad A_{\Lambda,\leq0}(1)=A_{\Delta^*,[0]},\label{ALambda<0}\\
(A\Lambda\Delta^{-1})_{\Lambda,\geq1}\Delta\Lambda^{-1}=A_{\Delta,\geq1},&\quad(A\Delta^{*-1})_{\Lambda,\leq0}\Delta^{*}=A_{\Delta^*,\geq1},\label{ALambdadelta}
\end{align}
where $(\quad)_{p,I}$ denote $(A)_{p,I}=\sum\limits_{m\in I}a_mp^m$ with $I$ is the subset of $\mathbb{Z}$ and $p\in\{\Lambda,\Delta,\Delta^*\}$.
\end{lem}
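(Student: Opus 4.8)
The plan is to reduce every assertion to the behaviour of the individual powers $\Delta^j$ and $\Delta^{*j}$ under re-expansion in $\Lambda$, and then extend by $\mathcal{F}$-linearity. The basic structural fact I would use at the outset is that $\Lambda$ commutes with $\Delta$ and with $\Delta^*$ (indeed $\Delta\Lambda=\Lambda\Delta=\Lambda^2-\Lambda$), hence also with $\Delta^{-1}$ and $\Delta^{*-1}$; this is exactly what lets the operator factors in \eqref{ALambdadelta} be moved past powers of $\Delta$. Since $\Lambda=\Delta+1$ and $\Lambda^{-1}=\Delta^*+1$, the identifications $\mathcal{F}((\Lambda^{-1}))=\mathcal{F}((\Delta^{-1}))$ and $\mathcal{F}((\Lambda))=\mathcal{F}((\Delta^{*-1}))$ recorded just before the lemma guarantee that the $\Delta$- and $\Delta^*$-expansions below exist and are unique, and that all the truncations stay inside the relevant ring; each identity is understood for $A$ in the appropriate ring ($\mathcal{F}((\Lambda^{-1}))$ for the $\Delta$-relations, $\mathcal{F}((\Lambda))$ for the $\Delta^*$-relations).

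First I would record the $\Lambda$-degree content of a single power. For $j\geq 0$ the binomial expansion $\Delta^j=(\Lambda-1)^j$ shows $\Delta^j$ is a polynomial in $\Lambda$ of degree $j$ with lowest term of degree $0$, so all its $\Lambda$-degrees are nonnegative; for $j<0$ the expansion \eqref{lambda- delta} gives $\Delta^{-1}=\sum_{k\geq1}\Lambda^{-k}$, so $\Delta^j$ is a series in strictly negative powers of $\Lambda$. Writing $A=\sum_j b_j\Delta^j$ and noting that the coefficient functions $b_j$ do not alter which $\Lambda$-degrees occur (they simply ride along to the left of each $\Lambda$-monomial), this separation by degree immediately yields $A_{\Delta,\geq0}=A_{\Lambda,\geq0}$, the first identity of \eqref{ALambda>0}. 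The second follows by applying this operator to the constant function $1$: since $\Delta(1)=0$ one has $\Delta^j(1)=0$ for $j\geq1$ while $\Delta^0(1)=1$, so $A_{\Delta,\geq0}(1)=A_{\Delta,[0]}$, and combining with the first identity gives $A_{\Delta,[0]}=A_{\Lambda,\geq0}(1)$. The two identities of \eqref{ALambda<0} are proved identically, using $\Delta^{*j}=(\Lambda^{-1}-1)^j$ (nonpositive $\Lambda$-degrees) for $j\geq0$, the expansion $\Delta^{*-1}=\sum_{k\geq1}\Lambda^{k}$ (strictly positive $\Lambda$-degrees) for $j<0$, and $\Delta^*(1)=0$.

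For \eqref{ALambdadelta} I would argue monomial by monomial. Using commutativity, $\Delta^j\,\Lambda\Delta^{-1}=\Lambda\Delta^{j-1}$, and I split into the cases $j\geq1$, $j=0$, $j\leq-1$. When $j\geq1$ the factor $\Delta^{j-1}$ has nonnegative $\Lambda$-degree, so $\Lambda\Delta^{j-1}$ lies entirely in $\Lambda$-degrees $\geq1$ and is untouched by the projection $(\ )_{\Lambda,\geq1}$; multiplying back by $\Delta\Lambda^{-1}$ gives $\Lambda\Delta^{j}\Lambda^{-1}=\Delta^j$. When $j\leq0$ the same computation lands $\Lambda\Delta^{j-1}$ entirely in $\Lambda$-degrees $\leq0$, where it is annihilated by $(\ )_{\Lambda,\geq1}$. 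Hence the left-hand side reproduces exactly $A_{\Delta,\geq1}$, which is the first identity; the role of the conjugating factors $\Lambda\Delta^{-1}$ and $\Delta\Lambda^{-1}$ is precisely to shift the boundary of the $\Lambda$-cutoff so that it falls between $j=0$ and $j=1$. The second identity is obtained the same way from $\Delta^{*j}\Delta^{*-1}=\Delta^{*j-1}$, checking that $\Delta^{*j-1}$ has nonpositive $\Lambda$-degree for $j\geq1$ and strictly positive $\Lambda$-degree for $j\leq0$, and then restoring the power with $\Delta^*$.

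The step I would treat most carefully, and the only genuinely delicate point, is the bookkeeping in \eqref{ALambdadelta}: one must confirm that the truncation $(\ )_{\Lambda,\geq1}$ (respectively $(\ )_{\Lambda,\leq0}$) removes nothing from the $j\geq1$ monomials and everything from the $j\leq0$ monomials, so that no boundary contribution at $j=0$ is spuriously kept or dropped, and that distinct $j$'s contributing to the same $\Lambda$-degree cannot create cross-terms across the cutoff. This is ensured by the exact degree ranges from the first step, which show that the $j\geq1$ block and the $j\leq0$ block occupy disjoint, correctly separated bands of $\Lambda$-degree, together with the fact that multiplication by $\Lambda\Delta^{-1}$ moves the critical $\Delta$-degree by exactly one. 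Well-definedness of the truncations under the subsequent multiplication is already guaranteed by the ring identifications recalled at the outset, so no separate convergence argument is needed.
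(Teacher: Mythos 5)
The paper does not actually prove this lemma: it is quoted from \cite{mtoda} with no argument given, so there is no in-paper proof to compare against. Your proof is correct and self-contained, and it is the natural (essentially canonical) argument: expand $\Delta^j$, $\Delta^{*j}$ in powers of $\Lambda$ via \eqref{delta-lambda}--\eqref{lambda- delta}, observe that $j\geq0$ gives $\Lambda$-degrees in $[0,j]$ (resp.\ $[-j,0]$) while $j<0$ gives strictly negative (resp.\ strictly positive) degrees, extend by left $\mathcal{F}$-linearity, use $\Delta(1)=\Delta^*(1)=0$ for the evaluation identities, and use $[\Lambda,\Delta]=[\Lambda,\Delta^*]=0$ for \eqref{ALambdadelta}; the degree bands for the $j\geq1$ and $j\leq0$ blocks are disjoint on the correct sides of the cutoff, exactly as you verified. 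You also rightly repaired a small imprecision in the statement itself, namely that the $\Delta$-identities require $A\in\mathcal{F}((\Lambda^{-1}))$ and the $\Delta^*$-identities require $A\in\mathcal{F}((\Lambda))$, rather than arbitrary $A\in\mathcal{F}[[\Lambda,\Lambda^{-1}]]$.
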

\begin{lem}\label{lemma:relation}
For any  formal difference operators $A,B\in \mathcal{F}[[\Lambda,\Lambda^{-1}]]$, we have
\begin{align}
&[A_{\Delta^*,\geq1},B]_{\Delta,\geq1}+[A,B_{\Delta,\geq1}]_{\Delta^*,\geq1}+[B_{\Delta,\geq1},A_{\Delta^*,\geq1}]=0.\label{commuAB}
\end{align}
\end{lem}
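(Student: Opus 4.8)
The plan is to reduce the whole identity to bookkeeping of $\Lambda$-supports, using Lemma~\ref{lemma:relation1} to rewrite every $\Delta$- and $\Delta^*$-truncation in terms of $\Lambda$-truncations. Set $P:=A_{\Delta^*,\geq1}$ and $Q:=B_{\Delta,\geq1}$, and split $A=P+A_{\Delta^*,\leq0}$ and $B=Q+B_{\Delta,\leq0}$. Substituting these splittings into the three commutators and using $[Q,P]=-[P,Q]$, the assertion becomes
\[
[P,B_{\Delta,\leq0}]_{\Delta,\geq1}+[A_{\Delta^*,\leq0},Q]_{\Delta^*,\geq1}+\big([P,Q]_{\Delta,\geq1}+[P,Q]_{\Delta^*,\geq1}-[P,Q]\big)=0,
\]
so it suffices to kill the first two terms and to show that the last bracket vanishes.

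For the supports, Lemma~\ref{lemma:relation1} gives $P=A_{\Lambda,\leq0}-A_{\Lambda,\leq0}(1)$ and $B_{\Delta,\leq0}=(B-B_{\Lambda,\geq0})+B_{\Lambda,\geq0}(1)$, so both $P$ and $B_{\Delta,\leq0}$ involve only nonpositive powers of $\Lambda$; dually $A_{\Delta^*,\leq0}$ and $Q$ involve only nonnegative powers. Since the product rule adds $\Lambda$-exponents, $[P,B_{\Delta,\leq0}]$ has no positive $\Lambda$-power and $[A_{\Delta^*,\leq0},Q]$ has no negative one. I would then note that an operator $C$ supported in nonpositive $\Lambda$-powers has $C_{\Delta,\geq1}=C_{\Lambda,\geq0}-C_{\Lambda,\geq0}(1)=c_0-c_0=0$ (its $\Lambda^0$ coefficient cancels its value at $1$), and symmetrically $C_{\Delta^*,\geq1}=0$ for $C$ supported in nonnegative powers; hence the first two terms vanish.

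It remains to check $[P,Q]_{\Delta,\geq1}+[P,Q]_{\Delta^*,\geq1}=[P,Q]$. For this I would record the decomposition $C=C_{\Delta,\geq1}+C_{\Delta^*,\geq1}+C(1)$, valid for every $C$: summing the two Lemma~\ref{lemma:relation1} identities $C_{\Delta,\geq1}=C_{\Lambda,\geq0}-C_{\Lambda,\geq0}(1)$ and $C_{\Delta^*,\geq1}=C_{\Lambda,\leq0}-C_{\Lambda,\leq0}(1)$ and cancelling the doubly-counted $\Lambda^0$ coefficient yields $C_{\Delta,\geq1}+C_{\Delta^*,\geq1}=C-C(1)$. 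Applied to $C=[P,Q]$, the remaining equality collapses to the single requirement $[P,Q](1)=0$. Since $P=A_{\Delta^*,\geq1}$ and $Q=B_{\Delta,\geq1}$ each annihilate the constant function $1$ (the $-(\,\cdot\,)(1)$ subtraction in Lemma~\ref{lemma:relation1} forces $P(1)=Q(1)=0$), one gets $[P,Q](1)=P(Q(1))-Q(P(1))=0$, finishing the proof.

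The only delicate point is the handling of the zeroth $\Lambda$-order terms: this is exactly where the $\Delta$- and $\Delta^*$-expansions diverge from a naive split of $\Lambda$-powers, and both the $+C(1)$ correction in the decomposition and the identities $P(1)=Q(1)=0$ live entirely in that degree. Everything else is routine exponent-counting, so I expect the main care to go into getting the placement and signs of these constant (value-at-$1$) terms right.
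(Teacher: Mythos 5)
Your proof is correct, and it rests on exactly the same ingredients as the paper's: Lemma~\ref{lemma:relation1} used to turn $\Delta$- and $\Delta^*$-truncations into statements about $\Lambda$-supports, plus an evaluation-at-$1$ argument for the zero mode. The organization, however, is dual to the paper's. The paper keeps the left-hand side of \eqref{commuAB} intact and projects it onto the three $\Lambda$-sectors $(\cdot)_{\Lambda,\geq1}$, $(\cdot)_{\Lambda,<0}$, $(\cdot)_{\Lambda,[0]}$: in the first sector the term $[A,B_{\Delta,\geq1}]_{\Delta^*,\geq1}$ drops out and the remaining two cancel (via $[A_{\Delta^*,\geq1},B_{\Delta,\leq0}]_{\Lambda,\geq1}=0$), symmetrically in the second, and the zero sector is dispatched by $(\mathrm{l.h.s.})_{\Lambda,[0]}=\mathrm{l.h.s.}(1)=0$. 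You instead split the inputs, $A=P+A_{\Delta^*,\leq0}$ and $B=Q+B_{\Delta,\leq0}$, and reduce to three termwise claims: the two cross terms die by the same support counting, and the main term dies by the decomposition $C=C_{\Delta,\geq1}+C_{\Delta^*,\geq1}+C(1)$ combined with $[P,Q](1)=0$. The cancellations are ultimately the same, but your route makes explicit two facts the paper leaves implicit --- the general three-part decomposition of an arbitrary operator and the annihilation properties $P(1)=Q(1)=0$ --- so your write-up is longer but more self-contained, and the identity $C_{\Delta,\geq1}+C_{\Delta^*,\geq1}=C-C(1)$ you isolate is a reusable statement in its own right, whereas the paper's sector-by-sector projection is more compact but asserts its key vanishings without detail.
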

\begin{prf}
Firstly denote $``l.h.s."$ the left hand side of \eqref{commuAB}, by using Lemma \ref{lemma:relation1} and $[A_{\Delta^*,\geq1},B_{\Delta,\leq0}]_{\Lambda,\geq1}=[A_{\Delta^*,\leq0},B_{\Delta,\geq1}]_{\Lambda,<0}=0$, we have
\begin{align*}
&(l.h.s.)_{\Lambda,\geq1}=[A_{\Delta^*,\geq1},B_{\Delta,\geq1}]_{\Lambda,\geq1}+[B_{\Delta,\geq1},A_{\Delta^*,\geq1}]_{\Lambda,\geq1}=0,\\
&(l.h.s.)_{\Lambda,<0}=[A_{\Delta^*,\geq1},B_{\Delta,\geq1}]_{\Lambda,<0}+[B_{\Delta,\geq1},A_{\Delta^*,\geq1}]_{\Lambda,<0}=0.
\end{align*}
In addition, we can find that $(l.h.s.)_{\Lambda,[0]}=l.h.s.(1)=0$. Therefore, we conclude this lemma.
\end{prf}

The modified Toda hierarchy  is a family of evolution equations
\begin{eqnarray}
\frac{\pa L_a}{\pa x^{(1)}_k}=[(L_1^k )_{\Delta,\geq1 },L_a], \quad
\frac{\pa L_a}{\pa x^{(2)}_k}=[(L_2^k)_{\Delta^*,\geq1},L_a],\label{laxeqL1}
\end{eqnarray}
in infinite variables  $\bs{x}=(\bs{x}^{(1)},\bs{x}^{(2)})$ with $\bs{x}^{(a)}=
\big(x_1^{(a)},x_2^{(a)},\cdots\big)$ $(a=1,2)$, where $L_1, L_2$ are  pseudo difference operators of the forms,
\begin{align}
&L_1 = \sum_{m=-1}^\infty u^{(1)}_m\Lambda^{-m}\in \mathcal{F}((\Lambda^{-1})),\quad L_2 =\sum_{m= -1}^\infty u_m^{(2)}\Lambda^{m}\in\mathcal{F}((\Lambda)),\label{l1l2}
\end{align}
There exist two dressing operators of the form with $w^{(1)}_0\neq0$ and $w^{(2)}_0\neq0$
\begin{align}
&W_1=\sum_{i=0}^{\infty}w^{(1)}_i\Lambda^{-i}\in \mathcal{F}((\Lambda^{-1})),\quad W_2=\sum_{i=0}^{\infty}w^{(2)}_i\Lambda^{i}\in \mathcal{F}((\Lambda)),\label{w1w2form}
\end{align}
such that the Lax operators $L_1, L_2$ can be expressed in terms of the dressing operators as
\begin{align*}
  &L_1= W_1\Lambda W_1^{-1},\quad L_2=W_2\Lambda ^{-1}W_2^{-1}.
  \end{align*}
  The flows of modified Toda act on dressing operators by
\begin{eqnarray}
&&\frac{\pa W_1}{\pa x^{(1)}_k} =-(L_1^k )_{\Delta,\leq0 }W_1,\quad
  \frac{\pa W_2}{\pa x^{(1)}_k} = (L_1^k )_{\Delta,\geq1 }W_2,\nonumber\\
 &&\frac{\pa W_1}{\pa x^{(2)}_k} = (L_2^k)_{\Delta^*,\geq1}W_1,\quad
  \frac{\pa W_2}{\pa x^{(2)}_k} = -(L_2^k)_{\Delta^*,\leq0 }W_2.\label{evo}
\end{eqnarray}
The wave functions $\psi_a(s,\bs{x},z)$ and adjoint wave functions $\psi^*_a(s,\bs{x},z) (a=1,2)$ of  modified Toda  are defined by
 \begin{align}
  &\psi_1(s,\bs{x},z)
  =W_1e^{\xi(\bs{x}^{(1)},\Lambda)}(z^s),\quad \psi_2(s,\bs{x},z)
  =W_2e^{\xi(\bs{x}^{(2)},\Lambda^{-1})}(z^s),\label{wave}\\
    &\psi_1^*(s,\bs{x},z)
  =(W_1^{-1}\Lambda\Delta^{-1})^*e^{-\xi(\bs{x}^{(1)},\Lambda^{-1})}(z^{-s}),\label{adjointwave1}\\
  &\psi_2^*(s,\bs{x},z)
 =(W_2^{-1}\Delta^{*-1})^*e^{-\xi(\bs{x}^{(2)},\Lambda)}(z^{-s}),\label{adjointwave2}
\end{align}
satisfying   the following bilinear equation
 \begin{align}
&\oint_{C_{\infty}}\frac{dz}{2\pi \mathbf{i}z}\psi_1(s,\bs{x},z)\psi^*_1(s',\bs{x}^\prime,z)+\oint_{C_{0}}\frac{dz}{2\pi \mathbf{i}z}\psi_2(s,\bs{x},z)\psi^*_2(s',\bs{x}^\prime,z)=1,\label{waveequation1}
\end{align}
where   $C_{\infty}$ and $C_{0}$ mean the circle around $z=\infty$ and  $z=0$ respectively. Both $C_{\infty}$ and $C_{0}$ are anticlockwise.

 By construction, we have the auxiliary linear equations
\begin{align}
L_1(\psi_1(s,\bs{x},z))=z\psi_1(s,\bs{x},z),&\quad L_2(\psi_2(s,\bs{x},z))=z^{-1}\psi_2(s,\bs{x},z),\label{L*psi}\\
\frac{\pa \psi_a(s,\bs{x},z)}{\pa x^{(1)}_k} = (L_1^k )_{\Delta,\geq1 }(\psi_a(s,\bs{x},z)),&\quad
  \frac{\pa \psi_a(s,\bs{x},z)}{\pa x^{(2)}_k} = (L_2^k)_{\Delta^*,\geq1}(\psi_a(s,\bs{x},z)).\nonumber
\end{align}
According to \cite{mtoda}, there exist two tau functions $\tau_0$ and $\tau_1$ such that  wave functions has the following representations in terms of tau functions
\begin{align}
  &\psi_1(s,\bs{x},z)
  = \frac{G_1(z)\tau_0(s,\bs{x})}{\tau_1(s,\bs{x})}
    z^se^{\xi(\bs{x}^{(1)},z)},\quad \psi_2(s,\bs{x},z)
  = \frac{G_2(z)\Lambda\tau_0(s,\bs{x})}{\tau_1(s,\bs{x})}
    z^se^{\xi(\bs{x}^{(2)},z^{-1})},\label{wave}
\end{align}
where $[z^{-1}]=(z^{-1},z^{-2}/2,\cdots)$,  $\xi(\bs{x}^{(a)},z)=\sum\limits_{k=1}^{\infty}x^{(a)}_{k}z^k$ and $$G_1(z)=\exp(-\sum_{i=1}^{\infty}\frac{\pa_{x^{(1)}_i}}{iz^i}),\quad G_2(z)=\exp(-\sum_{i=1}^{\infty}\frac{\pa_{x^{(2)}_i}}{i}z^i).$$

For $k\in \Z_{+}$ and $a=1,2$, denote
\begin{align}
p^{(a)}_k=\pa_{x_k^{(a)}}, \quad p^{(a)}_{-k}=kx_{k}^{(a)}.
\end{align}
Introduce $p^{(a)}(\lambda)=\sum_{k\in\Z}\frac{p^{(a)}_k}{k\lambda^k}$, we can define the vertex operators for $a=1,2$ as follows,
\begin{align}
&X_a(\lambda,\mu)=:e^{p^{(a)}(\lambda)-p^{(a)}(\mu)}:=
e^{\xi(\bs{x}^{(a)},\mu)-\xi(\bs{x}^{(a)},\lambda)}e^{\xi(\tilde{\pa}_{\bs{x}^{(a)}},\lambda^{-1})-\xi(\tilde{\pa}_{\bs{x}^{(a)}},\mu^{-1})},\label{ver-oper}
\end{align}
where $\tilde{\pa}_{\bs{x}^{(a)}}=(\pa_{x_1^{(a)}},\frac{\pa_{x_2^{(a)}}}{2},\cdots)$, $:\cdot:$ stands for the normal-order product, which means that $p_k^{(a)}(k\geq0)$ must be placed
to the right of $p_k^{(a)}(k<0)$.

To prepare for the next section, we require the following proposition.
\begin{prp}
The wave and adjoint wave functions satisfy the following  bilinear equations in modified version
\begin{align}
&\oint_{C_{\infty}}\frac{dz}{2\pi \mathbf{i}z}\frac{1-z/\mu}{1-z/\lambda}\frac{X_1(\lambda,\mu)G_1(z)\tau_0(s,\bs{x})}{G_1(z)\tau_0(s,\bs{x})}\psi_1(s,\bs{x},z)\psi^*_1(s',\bs{x}^\prime,z)\nonumber\\
&+\oint_{C_{0}}\frac{dz}{2\pi \mathbf{i}z}\frac{X_1(\lambda,\mu)G_2(z)\tau_0(s+1,\bs{x})}{G_2(z)\tau_0(s+1,\bs{x})}\psi_2(s,\bs{x},z)\psi^*_2(s',\bs{x}^\prime,z)=\frac{X_1(\lambda,\mu)\tau_1(s,\bs{x})}{\tau_1(s,\bs{x})},\label{wavemd1}
\end{align}
and
\begin{align}
&\oint_{C_{0}}\frac{dz}{2\pi \mathbf{i}z}\frac{1-\frac{1}{z\mu}}{1-\frac{1}{z\lambda}}\frac{X_2(\lambda,\mu)G_2(z)\tau_0(s+1,\bs{x})}{G_2(z)\tau_0(s+1,\bs{x})}\psi_2(s,\bs{x},z)\psi^*_2(s',\bs{x}^\prime,z)\nonumber\\
&+\oint_{C_{\infty}}\frac{dz}{2\pi \mathbf{i}z}\frac{X_2(\lambda,\mu)G_1(z)\tau_0(s,\bs{x})}{G_1(z)\tau_0(s,\bs{x})}\psi_1(s,\bs{x},z)\psi^*_1(s',\bs{x}^\prime,z)=\frac{X_2(\lambda,\mu)\tau_1(s,\bs{x})}{\tau_1(s,\bs{x})}.\label{wavemd2}
\end{align}
\end{prp}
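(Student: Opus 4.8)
The plan is to reduce both identities to the fundamental bilinear equation \eqref{waveequation1} evaluated at a shifted time, the shift being exactly the one encoded by the vertex operator. The starting point is that $X_1(\lambda,\mu)$, acting on a function of $\bs{x}$, factors as multiplication by $e^{\xi(\bs{x}^{(1)},\mu)-\xi(\bs{x}^{(1)},\lambda)}$ followed by the shift $\bs{x}^{(1)}\mapsto\bs{x}^{(1)}+[\lambda^{-1}]-[\mu^{-1}]$, where $[\lambda^{-1}]=(\lambda^{-1},\lambda^{-2}/2,\dots)$, while $G_1(z)$ and $G_2(z)$ act as the shifts $\bs{x}^{(1)}\mapsto\bs{x}^{(1)}-[z^{-1}]$ and $\bs{x}^{(2)}\mapsto\bs{x}^{(2)}-[z]$. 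Since $X_1$ touches only $\bs{x}^{(1)}$ while $G_2$ touches only $\bs{x}^{(2)}$, the explicit $\bs{x}^{(2)}$-exponential of $\psi_2$ is inert under $X_1$; this is precisely what produces the asymmetry between the two integrals in \eqref{wavemd1}, namely that the rational factor $\frac{1-z/\mu}{1-z/\lambda}$ appears only in the $C_\infty$-integral.

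First I would substitute the tau-function representations \eqref{wave} for $\psi_1$ and $\psi_2$ into the left-hand side of \eqref{wavemd1}. In the $C_\infty$-integral the denominator $G_1(z)\tau_0(s,\bs{x})=\tau_0(s,\bs{x}^{(1)}-[z^{-1}],\bs{x}^{(2)})$ of the vertex-operator ratio cancels exactly against the same factor sitting in the numerator of $\psi_1$; in the $C_0$-integral the denominator $G_2(z)\tau_0(s+1,\bs{x})=G_2(z)\Lambda\tau_0(s,\bs{x})$ cancels against the numerator of $\psi_2$. What survives in each numerator is $X_1(\lambda,\mu)$ applied to the corresponding $G_a(z)\tau_0$, that is $\tau_0$ with its $\bs{x}^{(1)}$-argument further shifted by $[\lambda^{-1}]-[\mu^{-1}]$, times the common prefactor $e^{\xi(\bs{x}^{(1)},\mu)-\xi(\bs{x}^{(1)},\lambda)}$.

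The decisive algebraic step is the elementary identity $e^{\xi([\lambda^{-1}]-[\mu^{-1}],z)}=\frac{1-z/\mu}{1-z/\lambda}$, which follows from $\xi([\lambda^{-1}],z)=-\log(1-z/\lambda)$. In the $C_\infty$-integral this absorbs the explicit factor $\frac{1-z/\mu}{1-z/\lambda}$ into the exponential $e^{\xi(\bs{x}^{(1)},z)}$ of $\psi_1$, turning it into $e^{\xi(\bs{x}^{(1)}+[\lambda^{-1}]-[\mu^{-1}],z)}$; combined with the shifted $\tau_0$ this reconstitutes precisely $\psi_1(s,\tilde{\bs{x}},z)\,\tau_1(s,\tilde{\bs{x}})$ with $\tilde{\bs{x}}=(\bs{x}^{(1)}+[\lambda^{-1}]-[\mu^{-1}],\bs{x}^{(2)})$, the leftover prefactor $e^{\xi(\bs{x}^{(1)},\mu)-\xi(\bs{x}^{(1)},\lambda)}\tau_1(s,\tilde{\bs{x}})$ being exactly $X_1(\lambda,\mu)\tau_1(s,\bs{x})$. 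In the $C_0$-integral the same manipulation goes through with no rational factor needed, since $X_1$ does not act on the $\bs{x}^{(2)}$-exponential of $\psi_2$, and one reconstitutes $\psi_2(s,\tilde{\bs{x}},z)\,\tau_1(s,\tilde{\bs{x}})$.

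Pulling the common factor $X_1(\lambda,\mu)\tau_1(s,\bs{x})/\tau_1(s,\bs{x})$ out of both integrals leaves exactly the left-hand side of \eqref{waveequation1} with $\bs{x}$ replaced by $\tilde{\bs{x}}$; since \eqref{waveequation1} holds identically in the unprimed times it holds at $\tilde{\bs{x}}$ and equals $1$, giving the right-hand side of \eqref{wavemd1}. Equation \eqref{wavemd2} follows by the same argument with the two components interchanged: $X_2(\lambda,\mu)$ shifts $\bs{x}^{(2)}$, so the rational factor now arises in the $C_0$-integral through $e^{\xi([\lambda^{-1}]-[\mu^{-1}],z^{-1})}=\frac{1-1/(z\mu)}{1-1/(z\lambda)}$ acting on the $e^{\xi(\bs{x}^{(2)},z^{-1})}$ of $\psi_2$. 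I expect the main obstacle to be purely organizational: keeping straight which time set each operator shifts, namely $X_a$ on $\bs{x}^{(a)}$, $G_1$ on $\bs{x}^{(1)}$, $G_2$ on $\bs{x}^{(2)}$, and $\Lambda$ on $s$, and confirming that the $G_a(z)\tau_0$ factors cancel cleanly; once this bookkeeping is pinned down the remainder reduces to the single exponential identity above.
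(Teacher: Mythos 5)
Your proposal is correct and is essentially the paper's own argument: the paper proves \eqref{wavemd1} and \eqref{wavemd2} precisely by shifting $\bs{x}^{(1)}$ (resp.\ $\bs{x}^{(2)}$) to $\bs{x}^{(1)}+[\lambda^{-1}]-[\mu^{-1}]$ in \eqref{waveequation1} and multiplying by $X_1(\lambda,\mu)\tau_1/\tau_1$ (resp.\ $X_2(\lambda,\mu)\tau_1/\tau_1$), which is exactly your shift-and-multiply mechanism run in the forward direction. Your version merely fills in the bookkeeping (the cancellation of the $G_a(z)\tau_0$ factors and the identity $e^{\xi([\lambda^{-1}]-[\mu^{-1}],z)}=\frac{1-z/\mu}{1-z/\lambda}$) that the paper leaves implicit.
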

\begin{prf}
The proof is similar to that for the case of the Toda hierarchy\cite{ASVM}. In fact,  \eqref{wavemd1} (resp. \eqref{wavemd2}) is derived by shifting  $x^{(1)}$ (resp. $x^{(2)}$) into $x^{(1)}+[\lambda^{-1}]-[\mu^{-1}]$ (resp. $x^{(2)}+[\lambda^{-1}]-[\mu^{-1}]$) in \eqref{waveequation1}, and
multiplying both sides by $\frac{X_1(\lambda,\mu)\tau_1(s,\bs{x})}{\tau_1(s,\bs{x})}$ (resp. $\frac{X_2(\lambda,\mu)\tau_1(s,\bs{x})}{\tau_1(s,\bs{x})}$).
\end{prf}

\section{Additional symmetry of modified Toda hierarchy}
In this section, in order to construct additional symmetries of the modified Toda hierarchy, we first define the Orlov--Schulman operators by means of the dressing operators
\begin{align}
&M_1=S_1 s\Lambda^{-1}S^{-1}_1=W_1\big(s\Lambda^{-1}+\sum_{k=1}^{\infty}kx_k^{(1)}\Lambda^{k-1}\big)W^{-1}_1, \label{M1}\\ &M_2=-S_2s\Lambda S^{-1}_2=W_2\big(-s\Lambda+\sum_{k=1}^{\infty}kx_k^{(2)}\Lambda^{1-k}\big)W^{-1}_2\label{M2},
\end{align}
where $S_1=W_1e^{\xi(\bs{x}^{(1)},\Lambda)}$ and $S_2=W_2e^{\xi(\bs{x}^{(2)},\Lambda^{-1})}. $
Thus for any formal series, we have $f(M_1,L_1)=S_1f(s\Lambda^{-1},\Lambda)S_1^{-1}$ and $f(M_2,L_2)=S_2f(s\Lambda,\Lambda^{-1})S_2^{-1}$.

Direct calculations show that the operators $M_1, M_2$ satisfy
\begin{align}
[L_1,M_1]=&[L_2,M_2]=1,\label{L1M1}\\
M_1(\psi_1)=\pa_z(\psi_1),&\quad M_2(\psi_2)=\pa_{z^{-1}}(\psi_2),\nonumber
\end{align}
and the following evolution equations
\begin{eqnarray}
\frac{\pa M_a}{\pa x^{(1)}_k}=[(L_1^k )_{\Delta,\geq1 },M_a], \quad
\frac{\pa M_a}{\pa x^{(2)}_k}=[(L_2^k)_{\Delta^*,\geq1},M_a],\quad a=1,2,\quad k\geq1.\label{laxeqM1}
\end{eqnarray}

Introduce  the following  vector fields for $l\in \Z$ and $m\geq0$,
\begin{eqnarray}
&&\frac{\pa W_1}{\pa s^{(1)}_{m,l}} =-(M_1^mL_1^l )_{\Delta,\leq0 }W_1,\quad
  \frac{\pa W_2}{\pa s^{(1)}_{m,l}} = (M_1^mL_1^l )_{\Delta,\geq1 }W_2,\label{s1W}\\
 &&\frac{\pa W_1}{\pa s^{(2)}_{m,l}} = (M_2^mL_2^l )_{\Delta^*,\geq1}W_1,\quad
  \frac{\pa W_2}{\pa s^{(2)}_{m,l}} = -(M_2^mL_2^l )_{\Delta^*,\leq0 }W_2,\label{s2W}
\end{eqnarray}
which induce vector fields on the Lax operators $L_1$ and $L_2$
\begin{align}
&&\frac{\pa L_1}{\pa s^{(1)}_{m,l}}=-[(M_1^mL_1^l )_{\Delta,\leq0 },L_1], \quad
\frac{\pa L_1}{\pa s^{(2)}_{m,l}}=[(M_2^mL_2^l )_{\Delta^*,\geq1},L_1],\label{addL1}\\
&&\frac{\pa L_2}{\pa s^{(1)}_{m,l}}=[(M_1^mL_1^l )_{\Delta,\geq1 },L_2], \quad
\frac{\pa L_2}{\pa s^{(2)}_{m,l}}=-[(M_2^mL_2^l )_{\Delta^*,\leq0},L_2],\label{addL2}
\end{align}
where $s^{(a)}_{m,l}$ denotes additional variables.

According to Lemma \ref{lemma:relation}, it is easy to check that
\begin{prp}\label{pro:comm}
The vector fields  $\frac{\pa}{\pa{s^{(a)}_{m,l}}}$  commute with the time flows of   modified Toda. Namely, for any  $a,b=1,2$, one has
\begin{align}
\left[\frac{\pa}{\pa{s^{(a)}_{m,l}}},\frac{\pa}{\pa{x^{(b)}_{k}}}\right]=0, \quad m\geq0,\quad l\in\Z,\quad k\geq1.\label{add-time}
\end{align}
Moreover, the vector fields $\frac{\pa}{\pa{s^{(1)}_{m,l}}}, \frac{\pa}{\pa{s^{(2)}_{m,l}}}$ acting on the dressing operators $W_1,W_2$ (or on the
wave function $\psi_a(s,\bs{x},z)(a=1,2)$) satisfy
\begin{align}
&\left[\frac{\pa}{\pa{s^{(a)}_{m,l}}},\frac{\pa}{\pa{s^{(a)}_{k,n}}}\right]=\sum_{p,q}C^{p,q}_{m,l,k,n}\frac{\pa}{\pa{s^{(a)}_{p,q}}},\quad a=1,2,\label{s1s1}\\
&\left[\frac{\pa}{\pa{s^{(1)}_{m,l}}},\frac{\pa}{\pa{s^{(2)}_{k,n}}}\right]=0. \label{s1s2}
\end{align}
 Therefore   these additional symmetries acting on the wave function form a centerless $W_{\infty}\times W_{\infty}$ algebra, where  $C^{p,q}_{m,l,k,n}$ are structure constants of the algebra.
\end{prp}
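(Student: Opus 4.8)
The plan is to turn every bracket in the statement into a Zakharov--Shabat (zero-curvature) identity on the dressing operators. For two flows write $\pa_{s^{(a)}_{m,l}}W_c=A\,W_c$ and (say) $\pa_{x^{(b)}_k}W_c=B\,W_c$ for $c=1,2$, where $A,B$ are the projected operators read off from \eqref{evo}, \eqref{s1W}, \eqref{s2W}; then
\[
\Big[\tfrac{\pa}{\pa s^{(a)}_{m,l}},\tfrac{\pa}{\pa x^{(b)}_k}\Big]W_c=\big(\pa_{s^{(a)}_{m,l}}B-\pa_{x^{(b)}_k}A+[B,A]\big)W_c,
\]
so each of \eqref{add-time}, \eqref{s1s1}, \eqref{s1s2} reduces to evaluating the operator $\pa B-\pa A+[B,A]$. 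The derivatives are supplied by the transport rules \eqref{laxeqL1}, \eqref{laxeqM1}, \eqref{addL1}, \eqref{addL2}: under each flow $L_a$ and $M_a$ evolve by commutator with the relevant projection (the explicit $\bs{x}$-dependence of $M_a$ being already absorbed into the clean form \eqref{laxeqM1}, and the additional flows carrying no such term since $M_a$ does not depend on $s^{(a)}_{m,l}$). Thus every Orlov--Schulman monomial $P=M_a^{p}L_a^{q}$ is transported by a pure commutator, which lets me differentiate the projections $P_{\Delta,\leq0}$ and $P_{\Delta^*,\geq1}$ termwise.

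First I would dispose of the same-type brackets, namely $[\pa_{s^{(a)}},\pa_{x^{(a)}}]$ in \eqref{add-time} and $[\pa_{s^{(a)}},\pa_{s^{(a)}}]$ in \eqref{s1s1}, where a single grading ($\Delta$ for $a=1$, $\Delta^*$ for $a=2$) occurs. Splitting each monomial as $P=P_{\geq1}+P_{\leq0}$ and using that a product of two $\geq1$ factors has degree $\geq2$ while a product of two $\leq0$ factors stays $\leq0$, the mixed-degree contributions cancel pairwise by antisymmetry of the bracket. For $[\pa_{s^{(a)}},\pa_{x^{(a)}}]$ the time flow transports the monomials through the \emph{positive} projection $(L_1^k)_{\Delta,\geq1}$ (cf. \eqref{laxeqM1}), and the residue is exactly $0$; for $[\pa_{s^{(a)}},\pa_{s^{(a)}}]$ both flows transport through \emph{negative} projections, and the residue equals $[M_1^{m}L_1^{l},M_1^{k}L_1^{n}]_{\Delta,\leq0}$ acting on $W_1$. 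Because $[L_1,M_1]=1$ by \eqref{L1M1}, this commutator of Orlov--Schulman monomials is again a finite combination $\sum_{p,q}C^{p,q}_{m,l,k,n}M_1^{p}L_1^{q}$, and comparing with $\pa_{s^{(1)}_{p,q}}W_1=-(M_1^{p}L_1^{q})_{\Delta,\leq0}W_1$ identifies the $C^{p,q}_{m,l,k,n}$ (up to the sign fixed by that convention) as the structure constants of the asserted $W_\infty$ algebra; the $a=2$ copy is identical with $\Delta$ replaced by $\Delta^*$.

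Next come the mixed-type brackets $[\pa_{s^{(1)}},\pa_{x^{(2)}}]$, $[\pa_{s^{(2)}},\pa_{x^{(1)}}]$ in \eqref{add-time} and $[\pa_{s^{(1)}},\pa_{s^{(2)}}]$ in \eqref{s1s2}, where the two projections are of opposite type; this is exactly the situation for which Lemma \ref{lemma:relation} was prepared. Taking $A=-P_{\Delta,\leq0}$ with $P=M_1^{m}L_1^{l}$ and $B=Q_{\Delta^*,\geq1}$, with $Q=L_2^{k}$ for the time flow and $Q=M_2^{k}L_2^{n}$ for the additional flow, I substitute $\pa Q=[P_{\Delta,\geq1},Q]$ and $\pa P=[Q_{\Delta^*,\geq1},P]$ and split $P_{\Delta,\leq0}=P-P_{\Delta,\geq1}$ inside $[B,A]$. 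The full (unprojected) commutator cancels, leaving
\[
-\Big([Q_{\Delta^*,\geq1},P]_{\Delta,\geq1}+[Q,P_{\Delta,\geq1}]_{\Delta^*,\geq1}+[P_{\Delta,\geq1},Q_{\Delta^*,\geq1}]\Big),
\]
which is precisely minus the left-hand side of \eqref{commuAB} with $A\mapsto Q$, $B\mapsto P$, hence $0$; the decoupling \eqref{s1s2} is what supplies the direct-product factor in $W_\infty\times W_\infty$.

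The main obstacle is the bookkeeping in this last step: arranging the $\Delta$- and $\Delta^*$-projections so that the unprojected commutator cancels and the residue lines up term-for-term with \eqref{commuAB}. The same-type computations are routine single-grading counts by comparison. The only remaining point of care is that each identity must be verified not only on $W_1$ but equally on $W_2$ (equivalently on the wave functions $\psi_a(s,\bs{x},z)$); these computations are entirely parallel, with the roles of the $(\,\cdot\,)_{\Delta,\leq0}$ and $(\,\cdot\,)_{\Delta^*,\geq1}$ projections exchanged as dictated by \eqref{s1W} and \eqref{s2W}.
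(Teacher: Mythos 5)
Your proposal is correct, and its core is exactly the paper's own argument: for the mixed-type brackets you compute the Zakharov--Shabat residue $\pa_{s}B-\pa_{x}A+[B,A]$ on the dressing operators and annihilate it with Lemma \ref{lemma:relation}, which is precisely how the paper proves \eqref{add-time} in the case $a=1$, $b=2$ (your residue and the paper's displayed operator differ only by the trivial rewriting $[Q_{\Delta^*,\geq1},P]_{\Delta,\leq0}=[Q_{\Delta^*,\geq1},P]-[Q_{\Delta^*,\geq1},P]_{\Delta,\geq1}$). You go beyond the paper only in spelling out what it dismisses as ``completely similar'' or delegates to \cite{ASVM} --- the same-type brackets via the single-grading splitting (note that in the $\Delta$-grading a product of two degree-$\geq1$ operators is only guaranteed to have degree $\geq1$, not $\geq2$, but that is all your cancellation needs) and the identification of the structure constants $C^{p,q}_{m,l,k,n}$ from $[L_1,M_1]=1$, including the check that the same constants arise from the action on $W_1$ and on $W_2$.
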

\begin{prf}
Here we only prove \eqref{add-time} in the case $a=1$ and $b=2$, since other cases are completely similar. Indeed, by using \eqref{evo}, \eqref{laxeqM1} and \eqref{s1W}, it follows that
\begin{align*}
\left[\frac{\pa}{\pa{s^{(1)}_{m,l}}},\frac{\pa}{\pa{x^{(2)}_{k}}}\right]W_1=&\frac{\pa}{\pa{s^{(1)}_{m,l}}}\left((L_2^k)_{\Delta^*,\geq1}W_1\right) +\frac{\pa}{\pa{x^{(2)}_{k}}}\left((M_1^mL_1^l )_{\Delta,\leq0 }W_1\right)\\
=&\left[(M_1^mL_1^l )_{\Delta,\geq1 },L_2^k\right]_{\Delta^*,\geq1}W_1 -\left[(L_2^k)_{\Delta^*,\geq1},(M_1^mL_1^l )_{\Delta,\leq0 }\right]W_1\\
&+\left[(L_2^k)_{\Delta^*,\geq1}, M_1^mL_1^l\right] _{\Delta,\leq0 }W_1=0,
\end{align*}
where we have used Lemma \ref{lemma:relation} with $A=L_2^k$ and $B=M_1^mL_1^l$.
 \eqref{s1s1} and \eqref{s1s2}  can be proven by means of similar process in \cite{ASVM}.
\end{prf}

This  proposition implies the vector fields \eqref{addL1} and \eqref{addL2} give an additional symmetry for modified Toda hierarchy.
\begin{remark}
Note that the vector fields $\pa_{s^{(1)}_{0,l}}$ cannot be identified with $\pa_{x_l^{(1)}}$, though their actions on wave function coincide. Their actions on $M_1$ are not identical: $\frac{\pa M_1}{\pa s^{(1)}_{0,l}}=-[(L_1^l )_{\Delta,\leq0 },M_1]$ and $\frac{\pa M_1}{\pa x^{(1)}_l}=[(L_1^l )_{\Delta,\geq1 },M_1]$ and these are not the same. The same applies to $\pa_{s^{(2)}_{0,l}}$ and $\pa_{x_l^{(2)}}$
\end{remark}

Next we want to represent the additional symmetries \eqref{addL1} and \eqref{addL2} on the tau function of the modified Toda hierarchy. To this end we introduce   two generating functions of operators with parameters $\lambda$ and $\mu$
\begin{align}
&Y_a(\lambda,\mu)=\sum_{m=0}^{\infty}\frac{(\mu-\lambda)^m}{m!}\sum_{l=-\infty}^{\infty}\lambda^{-m-l-1}(M_a^mL_a^{m+l})=e^{(\mu-\lambda)M_a}\delta(\lambda,L_a),\quad a=1,2.\label{Y1Y2}
\end{align}
Here, the delta-function $\delta(\lambda,\mu)$ is defined as
 \begin{align}
 \label{delta}\delta(\lambda,\mu)=\frac{1}{\mu}\sum\limits_{i=-\infty}^{\infty}(\frac{\mu}{\lambda})^i=\frac{1}{\mu}\frac{1}{1-\lambda/\mu}+\frac{1}{\lambda}\frac{1}{1-\mu/\lambda},
 \end{align}
with the following property:
 $$\delta(\lambda,\mu)f(\mu)=\delta(\lambda,\mu)f(\lambda)$$
for any function $f(\mu)=\sum_{k\in\Z}f_k\mu^k$.
\begin{prp}\label{Y1wave}
The generators $Y_a(\lambda,\mu)$ acting on the wave functions can be represented as
\begin{align*}
&Y_1(\lambda,\mu)\big(\psi_1(s,\bs{x},z)\big)=\frac{\mathbb{X}_1(\lambda,\mu)G_1(z)\tau_0(s)}{G_1(z)\tau_0(s)}\psi_1(s,\bs{x},z)\delta(\lambda,z),\\
&Y_2(\lambda,\mu)\big(\psi_2(s,\bs{x},z)\big)=\frac{\mathbb{X}_2(\lambda,\mu)G_2(z)\Lambda\tau_0(s)}{G_2(z)\Lambda\tau_0(s)}\psi_2(s,\bs{x},z)\delta(\lambda,z^{-1}),
\end{align*}
where \begin{align}
\mathbb{X}_1(\lambda,\mu)=(\frac{\mu}{\lambda})^sX_1(\lambda,\mu),\quad \mathbb{X}_2(\lambda,\mu)=(\frac{\lambda}{\mu})^sX_2(\lambda,\mu).\label{X1-X2}
\end{align}
\end{prp}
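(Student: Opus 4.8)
The plan is to evaluate $Y_a(\lambda,\mu)$ on $\psi_a$ by passing through the Orlov--Schulman relations, thereby converting the abstract operator into an explicit operation in the spectral variable $z$, and only at the end recognizing the vertex operator $\mathbb{X}_a$ from the tau-function representation \eqref{wave}. I treat the case $a=1$ in detail; the case $a=2$ is identical after replacing $z$ by $z^{-1}$ and $G_1,\bs{x}^{(1)},\tau_0$ by $G_2,\bs{x}^{(2)},\Lambda\tau_0$. Starting from the expansion in \eqref{Y1Y2}, I apply each term $M_1^mL_1^{m+l}$ to $\psi_1$. Since $L_1(\psi_1)=z\psi_1$ gives $L_1^{m+l}(\psi_1)=z^{m+l}\psi_1$, and since $M_1$ is a pseudo-difference operator in $s$ with $z$-independent coefficients, it commutes with the scalar $z^{m+l}$; combined with the relation $M_1(\psi_1)=\pa_z(\psi_1)$ and the fact that $\pa_z$ commutes with the $\Lambda$-action, an induction yields $M_1^m(\psi_1)=\pa_z^m(\psi_1)$. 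Hence $M_1^mL_1^{m+l}(\psi_1)=z^{m+l}\pa_z^m(\psi_1)$, where the crucial point is that $z^{m+l}$ must be kept to the left as a scalar and not differentiated.

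Next I perform the two summations. Summing over $l$ first, the factor $\sum_{l}\lambda^{-m-l-1}z^{m+l}=\sum_{k}\lambda^{-k-1}z^{k}$ is independent of $m$ and, by \eqref{delta} together with the delta-function property, equals $\delta(\lambda,z)$. This collapses the double sum to
\[
Y_1(\lambda,\mu)\big(\psi_1\big)=\delta(\lambda,z)\sum_{m=0}^{\infty}\frac{(\mu-\lambda)^m}{m!}\pa_z^m(\psi_1)=\delta(\lambda,z)\,e^{(\mu-\lambda)\pa_z}\psi_1 ,
\]
an identity that is equally transparent from the closed form $Y_1=e^{(\mu-\lambda)M_1}\delta(\lambda,L_1)$, since $\delta(\lambda,L_1)\psi_1=\delta(\lambda,z)\psi_1$ and the scalar $\delta(\lambda,z)$ commutes past $e^{(\mu-\lambda)M_1}$. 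Applying the delta-function property $\delta(\lambda,z)f(z)=\delta(\lambda,z)f(\lambda)$ to each $\pa_z^m\psi_1$ and recognizing the resulting series as the Taylor expansion of $\psi_1(s,\bs{x},\cdot)$ about $\lambda$ evaluated at $\lambda+(\mu-\lambda)=\mu$, I obtain $Y_1(\lambda,\mu)(\psi_1)=\delta(\lambda,z)\,\psi_1(s,\bs{x},\mu)$.

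It then remains to identify this with the right-hand side. Using the delta-function property once more to set $z=\lambda$ in the ratio $\mathbb{X}_1(\lambda,\mu)G_1(z)\tau_0/G_1(z)\tau_0$ and in $\psi_1(s,\bs{x},z)$, the target identity reduces to the pointwise statement
\[
\psi_1(s,\bs{x},\mu)=\frac{\mathbb{X}_1(\lambda,\mu)G_1(\lambda)\tau_0(s)}{G_1(\lambda)\tau_0(s)}\,\psi_1(s,\bs{x},\lambda).
\]
I verify this directly from \eqref{wave}: writing $G_1(\lambda)=e^{-\xi(\tilde{\pa}_{\bs{x}^{(1)}},\lambda^{-1})}$ and inserting the definition \eqref{ver-oper} of $X_1$ together with $\mathbb{X}_1=(\mu/\lambda)^sX_1$, the Miwa-shift factor $e^{\xi(\tilde{\pa}_{\bs{x}^{(1)}},\lambda^{-1})-\xi(\tilde{\pa}_{\bs{x}^{(1)}},\mu^{-1})}$ combines with $G_1(\lambda)$ to produce $G_1(\mu)$, so that $\mathbb{X}_1(\lambda,\mu)G_1(\lambda)\tau_0=(\mu/\lambda)^se^{\xi(\bs{x}^{(1)},\mu)-\xi(\bs{x}^{(1)},\lambda)}G_1(\mu)\tau_0$. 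Dividing by $G_1(\lambda)\tau_0$ and multiplying by $\psi_1(s,\bs{x},\lambda)$ reproduces exactly $\psi_1(s,\bs{x},\mu)$ as read off from its tau representation. The same bookkeeping, with $z=\lambda^{-1}$ substituted under $\delta(\lambda,z^{-1})$ so that $G_2(z)\mapsto G_2(\lambda^{-1})=e^{-\xi(\tilde{\pa}_{\bs{x}^{(2)}},\lambda^{-1})}$, settles the $Y_2$ identity.

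The main obstacle is the careful treatment of the two formal operations in the second paragraph: one must not prematurely replace $M_1$ by $\pa_z$ once the scalar $z^{m+l}$ has been produced (the commutation $M_1z^{m+l}=z^{m+l}M_1$ is what is actually used), and the passage from the differentiated series to a shift of the spectral argument is a formal Taylor-and-delta manipulation that is only legitimate under the delta-function. Making these two points precise, rather than the algebra of the summation step, is where the care is needed.
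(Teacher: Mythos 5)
Your proof is correct and follows essentially the same route as the paper's: both collapse the double sum into $\delta(\lambda,z)$ times a Taylor shift $z\mapsto\mu$ under the delta function (you via the spectral relations $L_1\psi_1=z\psi_1$, $M_1\psi_1=\pa_z\psi_1$; the paper via the dressed form $Y_1=S_1e^{(\mu-\lambda)s\Lambda^{-1}}\delta(\lambda,\Lambda)S_1^{-1}$ acting on $z^s$), and both then identify $\delta(\lambda,z)\psi_1(s,\bs{x},\mu)$ with the vertex-operator expression through the tau-function representation \eqref{wave} and Miwa shifts. The only cosmetic difference is that you run the final identification in reverse, specializing $z\to\lambda$ under the delta and verifying the resulting pointwise identity, whereas the paper inserts the shift $[\lambda^{-1}]-[z^{-1}]$ into the tau-function argument and reads off the ratio directly; the content is identical.
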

\begin{prf}
Let us check the first equality first. In view of the definition of $Y_1(\lambda,\mu)$ and wave function $\psi_1(s,\bs{x},z)$,
\begin{align*}
Y_1(\lambda,\mu)\big(\psi_1(s,\bs{x},z)\big)&=S_1e^{(\mu-\lambda)s\Lambda^{-1}}\delta(\lambda,\Lambda)S^{-1}_1(\psi_1(s,\bs{x},z))\\
&=S_1e^{(\mu-\lambda)\pa_z}\delta(\lambda,z)(z^s)=\delta(\lambda,z)S_1(\mu^s)\\
&=\delta(\lambda,z) \frac{\tau_0(s,\bs{x}^{(1)}-[\Lambda^{-1}],\bs{x}^{(2)} )}{\tau_1(s,\bs{x}^{(1)},\bs{x}^{(2)})}
    e^{\xi(\bs{x}^{(1)},\Lambda)}(\mu^s)\\
    &=\delta(\lambda,z) \frac{\tau_0(s,\bs{x}^{(1)}-[\mu^{-1}]+[\lambda^{-1}]-[z^{-1}],\bs{x}^{(2)} )}{\tau_1(s,\bs{x}^{(1)},\bs{x}^{(2)})}
    e^{\xi(\bs{x}^{(1)},\mu)}\mu^s\\
    &=(\frac{\mu}{\lambda})^s\frac{X_1(\lambda,\mu)G_1(z)\tau_0(s)}{G_1(z)\tau_0(s)}\psi_1(s,\bs{x},z)\delta(\lambda,z).
\end{align*}
The verification of the second equality is similar. Thus ending the proof of proposition \ref{Y1wave}.
\end{prf}

Introduce the generators of the additional symmetries
 of modified Toda
\begin{align*}
&\pa^*_{\alpha_{k}}=\sum_{m=0}^{\infty}\frac{(\mu-\lambda)^m}{m!}\sum_{l=-\infty}^{\infty}\lambda^{-m-l-1}\pa_{s^{(k)}_{m,m+l}},\quad k=1,2,
\end{align*}
then by \eqref{s1W}, \eqref{s2W}, \eqref{Y1Y2}, it follows that
\begin{align*}
\pa^*_{\alpha_{1}}W_1 =-Y_1(\lambda,\mu)_{\Delta,\leq0 }W_1,&\quad
  \pa^*_{\alpha_{1}}W_2 = Y_1(\lambda,\mu)_{\Delta\geq1}W_2,\label{}\\
\pa^*_{\alpha_{2}}W_1 = Y_2(\lambda,\mu)_{\Delta^*,\geq1}W_1,&\quad
  \pa^*_{\alpha_{2}}W_2 = -Y_2(\lambda,\mu)_{\Delta^*,\leq0 }W_2.\label{}
\end{align*}
Next let us   consider the actions of $\pa^*_{\alpha_{k}}(k=1,2)$ on tau functions.
For this, let us see a lemma developed in \cite{ASVM}.
\begin{lem}\label{equationlemma}
Given two operators $U$ and $V$, then
\begin{align*}
&U(s,\Lambda)V(s,\Lambda)={\rm Res}_{z}z^{-1}\sum_{j\in\mathbb{Z}}U(s,\Lambda)(z^s)\cdot V^*(s+j,\Lambda)(z^{-s-j})\Lambda^j,
\end{align*}
where ${\rm Res}_\lambda\sum_ia_i\lambda^i=a_{-1}$.
\end{lem}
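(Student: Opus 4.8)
The plan is to verify the identity by expanding both sides in components and matching the coefficient of each power $\Lambda^j$. First I would write $U=\sum_m u_m(s)\Lambda^m$ and $V=\sum_n v_n(s)\Lambda^n$, and compute the left-hand side with the multiplication rule $(f\Lambda^i)(g\Lambda^j)=f(s)g(s+i)\Lambda^{i+j}$ recalled in Section 2. This gives $UV=\sum_{m,n}u_m(s)\,v_n(s+m)\,\Lambda^{m+n}$, so that the coefficient of $\Lambda^j$ equals $\sum_m u_m(s)\,v_{j-m}(s+m)$. The whole proof then reduces to showing that the right-hand side produces the same coefficient for every $j$.

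Next I would evaluate the two factors appearing on the right. Since $\Lambda^m(z^s)=z^{s+m}$, one has $U(s,\Lambda)(z^s)=z^s\sum_m u_m(s)z^m$. For the adjoint factor I would use the definition $A^*=\sum_m\Lambda^{-m}a_m(s)$ recalled in Section 2, which gives $V^*=\sum_n\Lambda^{-n}v_n(s)$; applying this operator to the function $z^{-s}$ yields $V^*(z^{-s})=z^{-s}\sum_n v_n(s-n)z^n$. I would then read the symbol $V^*(s+j,\Lambda)(z^{-s-j})$ as this function with $s$ replaced by $s+j$, namely $z^{-s-j}\sum_n v_n(s+j-n)z^n$.

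I would then form the product of the two factors together with the prefactor $z^{-1}$, obtaining $z^{-j-1}\sum_{m,n}u_m(s)\,v_n(s+j-n)\,z^{m+n}$, and apply $\mathrm{Res}_z$, i.e. extract the coefficient of $z^{-1}$. This forces $m+n=j$, whence $n=j-m$ and $s+j-n=s+m$, so the residue collapses to $\sum_m u_m(s)\,v_{j-m}(s+m)$ --- precisely the $\Lambda^j$-coefficient found for the left-hand side. Multiplying by $\Lambda^j$ and summing over $j\in\mathbb{Z}$ then matches the two sides term by term.

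The one point requiring care --- and the main potential obstacle --- is the correct reading of $V^*(s+j,\Lambda)(z^{-s-j})$: one must first apply $V^*$ to $z^{-s}$ as a function of $s$ and only afterwards perform the shift $s\mapsto s+j$, so that the $j$-dependence lands on the argument of $v_n$ exactly as needed to produce $s+m$ once the residue is taken. I would also record that all the formal sums are well defined: since the operators lie in $\mathcal{F}((\Lambda^{-1}))$ or $\mathcal{F}((\Lambda))$, for each fixed $j$ only finitely many $m$ contribute to $\sum_m u_m(s)\,v_{j-m}(s+m)$, so the reindexing $n=j-m$ and the residue extraction are legitimate and not merely formal.
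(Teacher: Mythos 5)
Your proof is correct: the expansion of $UV$ via the multiplication rule, the evaluation $U(s,\Lambda)(z^s)=z^s\sum_m u_m(s)z^m$, the computation $V^*(z^{-s})=z^{-s}\sum_n v_n(s-n)z^n$ from the definition $A^*=\sum_m\Lambda^{-m}a_m(s)$, and the residue extraction forcing $m+n=j$ all check out, and the two natural readings of $V^*(s+j,\Lambda)(z^{-s-j})$ (shift after applying, or shift the coefficients and apply to $z^{-s-j}$) in fact coincide, so the point you flag as delicate is handled consistently. Note, however, that the paper itself offers no proof of this lemma at all: it is quoted verbatim from Adler--Shiota--van Moerbeke with a citation, so there is no internal argument to compare against. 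Your direct symbol-calculus verification is essentially the standard proof of this identity (and the one underlying the cited reference), so what your write-up buys is self-containedness: it makes explicit why the kernel $\sum_j U(z^s)\,V^*(s+j,\Lambda)(z^{-s-j})\Lambda^j$ reproduces the operator product, including the finiteness remark ensuring the formal manipulations are legitimate for operators in $\mathcal{F}((\Lambda^{-1}))$ or $\mathcal{F}((\Lambda))$, which is exactly the setting in which the paper later applies the lemma.
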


Denote
\begin{align*}
f_1(z)=-G_1(z)\left(\frac{\mathbb{X}_1(\lambda,\mu)\tau_0(s,\bs{x})}{\tau_0(s,\bs{x})}\right),&\quad f_2(z)=-G_2(z)\left(\frac{\mathbb{X}_2(\lambda,\mu)\Lambda(\tau_0(s,\bs{x}))}{\Lambda(\tau_0(s,\bs{x}))}\right),\\
g_1(z)=G_1(z)\left(\frac{\mathbb{X}_2(\lambda,\mu)\tau_0(s,\bs{x})}{\tau_0(s,\bs{x})}\right),&\quad g_2(z)=G_2(z)\left(\frac{\mathbb{X}_1(\lambda,\mu)\Lambda(\tau_0(s,\bs{x}))}{\Lambda(\tau_0(s,\bs{x}))}\right),
\end{align*}
then it follows from \eqref{L*psi}, \eqref{delta}, Proposition \ref{Y1wave}, and the following identities
\begin{align*}
&G_1(z)X_1(\lambda,\mu)=X_1(\lambda,\mu)G_1(z)\frac{1-\mu/z}{1-\lambda/z},\quad
&G_2(z)X_2(\lambda,\mu)=X_2(\lambda,\mu)G_2(z)\frac{1-z\mu}{1-z\lambda},
\end{align*}
we can then obtain the relations
\begin{align*}
&\big(Y_1(\lambda,\mu)(\mu-\lambda)-f_1(L_1)\big)\big(\psi_1(s,\bs{x},z)\big)=\frac{\mathbb{X}_1(\lambda,\mu)G_1(z)\tau_0(s)}{G_1(z)\tau_0(s)}\frac{\mu}{\lambda}\frac{1-z/\mu }{1- z/\lambda}\psi_1,\\
&\big(Y_2(\lambda,\mu)(\mu-\lambda)-f_2(L_2^{-1})\big)\big(\psi_2(s,\bs{x},z)\big)=\frac{\mathbb{X}_2(\lambda,\mu)G_2(z)\Lambda\tau_0(s)}{G_2(z)\Lambda\tau_0(s)}\frac{\mu}{\lambda}\frac{1-1/z\mu}{1-1/z\lambda}\psi_2.
\end{align*}
Hence by using the bilinear equations \eqref{wavemd1} and \eqref{wavemd2}, we have
\begin{align}
&\oint_{C_{\infty}}\frac{dz}{2\pi \mathbf{i}z}\frac{\lambda}{\mu}\big(Y_1(\lambda,\mu)(\mu-\lambda)-f_1(L_1)\big)\big(\psi_1(s,\bs{x},z)\big)\psi^*_1(s',\bs{x}^\prime,z)\nonumber\\
&+\oint_{C_{0}}\frac{dz}{2\pi \mathbf{i}z}g_2(L_2^{-1})(\psi_2(s,\bs{x},z))\psi^*_2(s',\bs{x}^\prime,z)=\frac{\mathbb{X}_1(\lambda,\mu)\tau_1(s,\bs{x})}{\tau_1(s,\bs{x})},\label{Y1f1g2}\\
&\oint_{C_{0}}\frac{dz}{2\pi \mathbf{i}z}\frac{\lambda}{\mu}\big(Y_2(\lambda,\mu)(\mu-\lambda)-f_2(L_2^{-1})\big)(\psi_2(s,\bs{x},z))\psi^*_2(s',\bs{x}^\prime,z)\nonumber\\
&+\oint_{C_{\infty}}\frac{dz}{2\pi \mathbf{i}z}g_1(L_1)(\psi_1(s,\bs{x},z))\psi^*_1(s',\bs{x}^\prime,z)=\frac{\mathbb{X}_2(\lambda,\mu)\tau_1(s,\bs{x})}{\tau_1(s,\bs{x})}.\label{Y2f2g1}
\end{align}
\begin{prp}\label{pro:Y1Y2}

\begin{align}
&Y_1(\lambda,\mu)_{\Delta\geq1}=\frac{\mu}{\lambda(\mu-\lambda)}\big(g_2(L_2^{-1})-\frac{\mathbb{X}_1(\lambda,\mu)\tau_1(s,\bs{x})}{\tau_1(s,\bs{x})}\big),\label{Y1delta>0}\\
&Y_1(\lambda,\mu)_{\Delta\leq0}=\frac{1}{\mu-\lambda}\big(f_1(L_1)+\frac{\mu}{\lambda}\frac{\mathbb{X}_1(\lambda,\mu)\tau_1(s,\bs{x})}{\tau_1(s,\bs{x})}\big),\label{Y1delta<1}
\end{align}
and
\begin{align}
&Y_2(\lambda,\mu)_{\Delta^*\geq1}=\frac{\mu}{\lambda(\mu-\lambda)}\big(g_1(L_1)-\frac{\mathbb{X}_2(\lambda,\mu)\tau_1(s,\bs{x})}{\tau_1(s,\bs{x})}\big),\label{Y2delta*>0}\\
&Y_2(\lambda,\mu)_{\Delta^*\leq0}=\frac{1}{\mu-\lambda}\big(f_2(L_2^{-1})+\frac{\mu}{\lambda}\frac{\mathbb{X}_2(\lambda,\mu)\tau_1(s,\bs{x})}{\tau_1(s,\bs{x})}\big).\label{Y2delta*<1}
\end{align}
\end{prp}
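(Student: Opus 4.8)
The plan is to strip the wave functions from the two assembled bilinear identities \eqref{Y1f1g2} and \eqref{Y2f2g1} and thereby read off the four graded projections. The organizing principle is that an operator lying in $\mathcal{F}((\Lambda^{-1}))$ is completely determined by its action on $\psi_1$, an operator in $\mathcal{F}((\Lambda))$ by its action on $\psi_2$, and that either action can be inverted through the residue pairing of Lemma \ref{equationlemma}. Accordingly I would obtain \eqref{Y1delta>0}--\eqref{Y1delta<1} from \eqref{Y1f1g2}, and \eqref{Y2delta*>0}--\eqref{Y2delta*<1} from \eqref{Y2f2g1}, each equation contributing its two complementary halves through its two contour integrals.

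First I would set up the reconstruction dictionary. Absorbing the exponential into the dressing operator, write $\hat{W}_1=W_1e^{\xi(\bs{x}^{(1)},\Lambda)}$ so that $\psi_1=\hat{W}_1(z^s)$ and, after the same absorption in \eqref{adjointwave1}, $\psi_1^*=(\hat{W}_1^{-1}\Lambda\Delta^{-1})^*(z^{-s})$. Taking $\bs{x}'=\bs{x}$, letting $s'=s+j$ and weighting by $\Lambda^j$, Lemma \ref{equationlemma} applied to $U=A\hat{W}_1$ and $V=\hat{W}_1^{-1}\Lambda\Delta^{-1}$ gives
\[
\res_z\Big[z^{-1}\sum_{j\in\Z}(A\psi_1)(s,\bs{x},z)\,\psi_1^*(s+j,\bs{x},z)\,\Lambda^j\Big]=A\,\Lambda\Delta^{-1},
\]
and the dual computation with $(\psi_2,\psi_2^*)$ returns $A\Delta^{*-1}$. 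The contour then decides which part actually survives: the prefactor $\tfrac{1-z/\mu}{1-z/\lambda}$ carried by the wave functions (and, upstream, the $\delta(\lambda,z)$ of Proposition \ref{Y1wave}) expands in powers of $z^{-1}$ on $C_\infty$ and in powers of $z$ on $C_0$, so the $C_\infty$ integral keeps the near-infinity piece of $A\Lambda\Delta^{-1}$ and the $C_0$ integral keeps the complementary piece. Stripping the dressing factors $\Lambda\Delta^{-1}$ and $\Delta^{*-1}$ to pass back to the bare $\Delta$- and $\Delta^*$-gradings is exactly what relations \eqref{ALambda>0}--\eqref{ALambdadelta} of Lemma \ref{lemma:relation1} accomplish.

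With the dictionary in place I would feed \eqref{Y1f1g2} through it term by term. The scalar-multiplication identities established just above \eqref{Y1f1g2} already tell us how $\tfrac{\lambda}{\mu}\bigl(Y_1(\lambda,\mu)(\mu-\lambda)-f_1(L_1)\bigr)$ acts on $\psi_1$ and how $g_2(L_2^{-1})$ acts on $\psi_2$; reconstructing the $C_\infty$ integral then delivers the $\Delta\leq0$ projection of $Y_1(\lambda,\mu)$, with $f_1(L_1)$ supplying precisely the constant ($\Delta,[0]$) contribution, while the $C_0$ integral delivers the $\Delta\geq1$ projection. The scalar right-hand side $\tfrac{\mathbb{X}_1\tau_1}{\tau_1}$ reconstructs to the boundary term shared between the two halves, which is why it appears in both \eqref{Y1delta>0} and \eqref{Y1delta<1}; clearing the $(\mu-\lambda)$ normalization built into \eqref{Y1Y2} and the $\lambda/\mu$ and $(\mu/\lambda)^s$ factors of \eqref{X1-X2} then produces the two formulas. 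The same argument applied to \eqref{Y2f2g1}, now with $\Delta^*$ in place of $\Delta$ and with $(\psi_2,\psi_2^*)$, $g_1(L_1)$, $f_2(L_2^{-1})$ in the corresponding roles, yields \eqref{Y2delta*>0}--\eqref{Y2delta*<1}.

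The step I expect to be the real obstacle is the grading bookkeeping: proving that the $C_\infty$ pairing reconstructs exactly $Y_1(\lambda,\mu)_{\Delta\leq0}$, and the $C_0$ pairing exactly $Y_1(\lambda,\mu)_{\Delta\geq1}$, rather than some $\Lambda$- or $\Delta^*$-graded neighbour. This rests on threading the $\Lambda\Delta^{-1}$ and $\Delta^{*-1}$ dressings of the adjoint wave functions correctly through Lemma \ref{equationlemma} and converting with \eqref{ALambda>0}--\eqref{ALambdadelta}, together with tracking the $z$-dependent prefactor through the residue so that the constant terms land on the correct side. Once the grading correspondence is nailed down, the rest is the routine rearrangement of the $\lambda,\mu$ prefactors.
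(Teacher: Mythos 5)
Your proposal is correct and is essentially the paper's own proof: set $\bs{x}'=\bs{x}$, shift $s'\to s+j$ and sum against $\Lambda^j$, apply Lemma \ref{equationlemma} to turn \eqref{Y1f1g2} (resp. \eqref{Y2f2g1}) into a single operator identity, then take the $(\cdot)_{\Lambda,\geq1}$ and $(\cdot)_{\Lambda,\leq0}$ projections, converting gradings via Lemma \ref{lemma:relation1}. One cosmetic imprecision in your narrative: the contours do not each ``keep a piece'' of one operator --- both integrals reconstruct full operator products, namely $\frac{\lambda}{\mu}\big((\mu-\lambda)Y_1(\lambda,\mu)-f_1(L_1)\big)\Lambda\Delta^{-1}$ and $g_2(L_2^{-1})\Delta^{*-1}$, and the splitting into \eqref{Y1delta>0} and \eqref{Y1delta<1} occurs only when the $\Lambda$-projections are applied to their sum (the $C_0$ term lying entirely in $\Lambda^{\geq1}$ is what makes this work).
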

\begin{prf}
Firstly, by shifting $s'$ to $s+i$, summing over $i$, and setting $x^{(a)'} = x^{(a)}$ for $a = 1, 2$ in \eqref{Y1f1g2}, we can apply Lemma \ref{equationlemma} to obtain:
\begin{align*}
\frac{\lambda}{\mu}\Big(Y_1(\lambda,\mu)(\mu-\lambda)-f_1(L_1)\Big)\Lambda\Delta^{-1}+g_2(L_2^{-1})\Delta^{*-1}=\sum_{i\in\Z}\frac{\mathbb{X}_1(\lambda,\mu)\tau_1(s,\bs{x})}{\tau_1(s,\bs{x})}\Lambda^i,
\end{align*}
This leads to equations \eqref{Y1delta>0} and \eqref{Y1delta<1} by taking  $(\cdot)_{\Lambda\geq1}$ and $(\cdot)_{\Lambda\leq0}$, respectively. Similarly, equations \eqref{Y2delta*>0} and \eqref{Y2delta*<1} can be derived from \eqref{Y2f2g1} by similar steps.
\end{prf}
\begin{thm}\label{thm:ASvM}
$\mathbf{(ASvM\ formula)}$\ The vector fields of types $\pa^*_{\alpha_{k}}(k=1,2)$ on the wave function and the vertex operators  $\mathbb{X}_a(\lambda,\mu)(a=1,2)$ on tau functions are related as follows
\begin{align*}
&\frac{\pa^*_{\alpha_{1}}\psi_1}{\psi_1}=\frac{1}{\mu-\lambda}\Big(G_1(z)\big(\frac{\mathbb{X}_1(\lambda,\mu)\tau_0(s,\bs{x})}{\tau_0(s,\bs{x})}\big)-\frac{\mu}{\lambda}\frac{\mathbb{X}_1(\lambda,\mu)\tau_1(s,\bs{x})}{\tau_1(s,\bs{x})}\Big),\\
&\frac{\pa^*_{\alpha_{1}}\psi_2}{\psi_2}=\frac{1}{\mu-\lambda}\frac{\mu}{\lambda}\Big(G_2(z)\big(\frac{\mathbb{X}_1(\lambda,\mu)\tau_0(s+1,\bs{x})}{\tau_0(s+1,\bs{x})}\big)-\frac{\mathbb{X}_1(\lambda,\mu)\tau_1(s,\bs{x})}{\tau_1(s,\bs{x})}\Big),
\end{align*}
and
\begin{align*}
&\frac{\pa^*_{\alpha_{2}}\psi_1}{\psi_1}=\frac{1}{\mu-\lambda}\frac{\mu}{\lambda}\Big(G_1(z)\big(\frac{\mathbb{X}_2(\lambda,\mu)\tau_0(s,\bs{x})}{\tau_0(s,\bs{x})}\big)-\frac{\mathbb{X}_2(\lambda,\mu)\tau_1(s,\bs{x})}{\tau_1(s,\bs{x})}\Big),\\
&\frac{\pa^*_{\alpha_{2}}\psi_2}{\psi_2}=\frac{1}{\mu-\lambda}\Big(G_2(z)\big(\frac{\mathbb{X}_2(\lambda,\mu)\tau_0(s+1,\bs{x})}{\tau_0(s+1,\bs{x})}\big)-\frac{\mu}{\lambda}\frac{\mathbb{X}_2(\lambda,\mu)\tau_1(s,\bs{x})}{\tau_1(s,\bs{x})}\Big).
\end{align*}
\end{thm}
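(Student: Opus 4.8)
The plan is to transfer the four relations defining the action of $\pa^*_{\alpha_k}$ on the dressing operators to the wave functions, then replace the resulting projected generators using Proposition \ref{pro:Y1Y2}, and finally collapse the operator-valued pieces to scalars via the auxiliary linear equations \eqref{L*psi}. First I would observe that the additional variables $s^{(a)}_{m,l}$ enter $\psi_a$ only through $W_a$, because the factors $e^{\xi(\bs{x}^{(a)},\cdot)}(z^s)$ are independent of them; hence the relations $\pa^*_{\alpha_1}W_1=-Y_1(\lambda,\mu)_{\Delta,\leq0}W_1$, $\pa^*_{\alpha_1}W_2=Y_1(\lambda,\mu)_{\Delta,\geq1}W_2$, $\pa^*_{\alpha_2}W_1=Y_2(\lambda,\mu)_{\Delta^*,\geq1}W_1$ and $\pa^*_{\alpha_2}W_2=-Y_2(\lambda,\mu)_{\Delta^*,\leq0}W_2$ descend verbatim to $\psi_a$, e.g. $\pa^*_{\alpha_1}\psi_1=-Y_1(\lambda,\mu)_{\Delta,\leq0}(\psi_1)$ and the three analogues.

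Next I would substitute the four identities of Proposition \ref{pro:Y1Y2} for these projections. Each right-hand side is the sum of a function of the Lax operator (one of $f_1(L_1)$, $g_1(L_1)$, $f_2(L_2^{-1})$, $g_2(L_2^{-1})$) and a scalar multiple of $\frac{\mathbb{X}_a(\lambda,\mu)\tau_1}{\tau_1}$. The decisive simplification is that $\psi_a$ is a Lax eigenfunction: by \eqref{L*psi} one has $L_1(\psi_1)=z\psi_1$ and $L_2^{-1}(\psi_2)=z\psi_2$, so any formal series in $L_1$ acting on $\psi_1$ (resp. in $L_2^{-1}$ acting on $\psi_2$) reduces to multiplication by its value at $z$. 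Thus $f_1(L_1)(\psi_1)=f_1(z)\psi_1$, $g_1(L_1)(\psi_1)=g_1(z)\psi_1$, $f_2(L_2^{-1})(\psi_2)=f_2(z)\psi_2$ and $g_2(L_2^{-1})(\psi_2)=g_2(z)\psi_2$.

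Finally I would divide by $\psi_a$ and insert the definitions of $f_a(z)$, $g_a(z)$ together with $\Lambda\tau_0=\tau_0(s+1)$. For instance the first relation becomes $\frac{\pa^*_{\alpha_1}\psi_1}{\psi_1}=-\frac{1}{\mu-\lambda}\big(f_1(z)+\frac{\mu}{\lambda}\frac{\mathbb{X}_1(\lambda,\mu)\tau_1}{\tau_1}\big)$, which upon $f_1(z)=-G_1(z)\big(\frac{\mathbb{X}_1(\lambda,\mu)\tau_0}{\tau_0}\big)$ is exactly the claimed formula; the remaining three follow identically, the prefactor $\frac{\mu}{\lambda}$ surfacing precisely where Proposition \ref{pro:Y1Y2} supplies the extra factor $\frac{\mu}{\lambda(\mu-\lambda)}$ in the $\Delta,\geq1$ and $\Delta^*,\geq1$ projections.

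Since the analytic work is already absorbed into Proposition \ref{pro:Y1Y2}, I expect the only real difficulty to be bookkeeping: pairing each wave function with the correct projection ($\Delta$ versus $\Delta^*$, $\geq1$ versus $\leq0$) and sign coming from the action on $W_1,W_2$, and keeping the two distinct prefactors straight. The one genuinely delicate step is the eigenvalue direction in the $a=2$ cases, where one must evaluate $f_2$ and $g_2$ through $L_2^{-1}(\psi_2)=z\psi_2$ rather than $L_2(\psi_2)=z^{-1}\psi_2$, so that the arguments of $f_2,g_2$ agree with the variable $z$ in which they were defined.
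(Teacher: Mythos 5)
Your proposal is correct and follows essentially the same route as the paper: the theorem there is intended as an immediate consequence of the action of $\pa^*_{\alpha_k}$ on the dressing operators, the projection formulas of Proposition \ref{pro:Y1Y2}, and the eigenvalue relations \eqref{L*psi} (with $L_2^{-1}(\psi_2)=z\psi_2$ for the $a=2$ cases), which is exactly the chain of substitutions you carry out. Your bookkeeping of signs, of the $\frac{\mu}{\lambda}$ prefactors, and of $\Lambda\tau_0=\tau_0(s+1)$ reproduces all four stated identities, so nothing is missing.
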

At last according to the relations  between the wave functions $\psi_a(a=1,2)$ and the tau functions $\tau_a(a=1,2)$ shown in \eqref{wave}, we get
the following proposition.
\begin{prp}\label{pro:paalpha}
The actions of vector fields of types $\pa^*_{\alpha_{a}}(a=1,2)$ on the tau functions are given as follows
\begin{align*}
&\pa^*_{\alpha_{1}}\tau_0(s,\bs{x})=\frac{1}{\mu-\lambda}\mathbb{X}_1(\lambda,\mu)\tau_0(s,\bs{x}),\quad
\pa^*_{\alpha_{1}}\tau_1(s,\bs{x})=\frac{\mu}{\lambda}\frac{1}{\mu-\lambda}\mathbb{X}_1(\lambda,\mu)\tau_1(s,\bs{x}),
\end{align*}
and
\begin{align*}
&\pa^*_{\alpha_{2}}\tau_0(s,\bs{x})=\frac{\mu}{\lambda}\frac{1}{\mu-\lambda}\mathbb{X}_2(\lambda,\mu)\tau_0(s,\bs{x}),\quad
\pa^*_{\alpha_{2}}\tau_1(s,\bs{x})=\frac{\mu}{\lambda}\frac{1}{\mu-\lambda}\mathbb{X}_2(\lambda,\mu)\tau_1(s,\bs{x}).
\end{align*}
\end{prp}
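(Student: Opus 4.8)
The plan is to read the proposition off the ASvM formula of Theorem~\ref{thm:ASvM}, using the tau-function representation~\eqref{wave} of the wave functions to rewrite the left-hand sides $\partial^*_{\alpha_a}\psi_b/\psi_b$ in terms of tau functions and then matching the $z$-dependent and $z$-independent pieces. Since $\psi_1=\frac{G_1(z)\tau_0}{\tau_1}z^se^{\xi(\bs{x}^{(1)},z)}$ and the factor $z^se^{\xi(\bs{x}^{(1)},z)}$ is free of the additional variables $s^{(a)}_{m,l}$, and since by Proposition~\ref{pro:comm} the flows $\partial^*_{\alpha_a}$ commute with the time flows (hence with the shift operators $G_1(z),G_2(z)$, which act only in the $\bs{x}$-variables), logarithmic differentiation gives
\[
\frac{\partial^*_{\alpha_1}\psi_1}{\psi_1}=\frac{G_1(z)(\partial^*_{\alpha_1}\tau_0)}{G_1(z)\tau_0}-\frac{\partial^*_{\alpha_1}\tau_1}{\tau_1},
\]
and likewise for $\psi_2$ with $G_2(z)$ and $\tau_0(s+1)$.

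Next I would equate these with the right-hand sides of Theorem~\ref{thm:ASvM}. Using that each $G_a(z)$ is an invertible shift, hence an algebra homomorphism with $G_a(z)(FH)=G_a(z)(F)\,G_a(z)(H)$, the $\psi_1$ relation becomes
\[
G_1(z)\Big(\frac{\partial^*_{\alpha_1}\tau_0}{\tau_0}-\frac{1}{\mu-\lambda}\frac{\mathbb{X}_1(\lambda,\mu)\tau_0}{\tau_0}\Big)=\frac{\partial^*_{\alpha_1}\tau_1}{\tau_1}-\frac{\mu}{\lambda(\mu-\lambda)}\frac{\mathbb{X}_1(\lambda,\mu)\tau_1}{\tau_1}.
\]
The right-hand side is $z$-independent, while the left-hand side is $G_1(z)=\mathrm{Id}+O(z^{-1})$, the translation $\bs{x}^{(1)}\mapsto\bs{x}^{(1)}-[z^{-1}]$, applied to a fixed function $g$; the image $G_1(z)(g)$ can be $z$-independent only if $g$ is annihilated by every $\partial_{x_i^{(1)}}$. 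This separates the relation into the $G_1$-dressed part $\partial^*_{\alpha_1}\tau_0=\frac{1}{\mu-\lambda}\mathbb{X}_1(\lambda,\mu)\tau_0$ and the bare part $\partial^*_{\alpha_1}\tau_1=\frac{\mu}{\lambda(\mu-\lambda)}\mathbb{X}_1(\lambda,\mu)\tau_1$, which are the claimed formulas. The $\alpha_2$ formulas follow by the same argument applied to the second pair in Theorem~\ref{thm:ASvM}, with the two components interchanged and $G_2(z)=\mathrm{Id}+O(z)$ (a translation in $\bs{x}^{(2)}$) used near $z=0$.

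The main obstacle I anticipate is the clean separation of the $z$-dependent and $z$-independent parts, i.e.\ ruling out a nonzero $(\lambda,\mu)$-constant of integration in the residual function $g$. I expect to settle this by cross-checking the $\psi_1$- and $\psi_2$-equations (valid near $z=\infty$ and $z=0$ respectively): the $\psi_1$-equation forces $g$ to be independent of $\bs{x}^{(1)}$, the $\psi_2$-equation forces it to be independent of $\bs{x}^{(2)}$, and comparing the two after an $s$-shift (using $\Lambda\,\mathbb{X}_1(\lambda,\mu)=\frac{\mu}{\lambda}\mathbb{X}_1(\lambda,\mu)\,\Lambda$, which comes from the factor $(\mu/\lambda)^s$ in~\eqref{X1-X2}) pins $g$ down to a genuine constant that is then eliminated by the normalization of the tau functions.

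A useful consistency check along the way is that this same $s$-dependence $(\mu/\lambda)^s$ of $\mathbb{X}_a$ is exactly what reconciles the two different prefactors appearing in Theorem~\ref{thm:ASvM}: the $\psi_2$-equation for $\partial^*_{\alpha_1}$ carries $\tau_0(s+1)$ with prefactor $\frac{\mu}{\lambda(\mu-\lambda)}$, whereas the $\psi_1$-equation carries $\tau_0(s)$ with prefactor $\frac{1}{\mu-\lambda}$, and the shift relation $\partial^*_{\alpha_1}\tau_0(s+1)=\Lambda\big(\frac{1}{\mu-\lambda}\mathbb{X}_1(\lambda,\mu)\tau_0(s)\big)=\frac{\mu}{\lambda(\mu-\lambda)}\mathbb{X}_1(\lambda,\mu)\tau_0(s+1)$ shows the two are compatible. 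This confirms that the factors $\frac{\mu}{\lambda}$ distinguishing the $\tau_0$ and $\tau_1$ formulas in the statement are forced, rather than chosen.
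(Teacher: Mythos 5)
Your proposal is correct and takes essentially the same route as the paper: Proposition~\ref{pro:paalpha} is obtained there directly by combining the ASvM formula of Theorem~\ref{thm:ASvM} with the tau-function representation~\eqref{wave} of the wave functions and separating the $z$-dependent from the $z$-independent parts. Your careful handling of the $G_a(z)$-dressing and of the residual integration constant (via the $s$-shift and normalization) simply fills in details that the paper leaves implicit.
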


We want to represent the additional symmetries \eqref{s1W} and \eqref{s2W} with the tau functions. To this end we expand the vertex operators in \eqref{X1-X2} formally as
\begin{align}
&\mathbb{X}_a(\lambda,\mu)=\sum_{m=0}^{\infty}\frac{(\mu-\lambda)^m}{m!}\sum_{l=-\infty}^{\infty}\lambda^{-m-l}W_{s,l,m}^{(a)},\quad a=1,2
\end{align}
where
\begin{align}
&W_{s,l,m}^{(1)}=\sum_{i=0}^m\frac{m!}{(m-i)!}\binom{s}{i}\res_{\lambda}\lambda^{m+l-i}\pa^{m-i}_{\mu}|_{\mu=\lambda}X_1(\lambda,\mu),\\
&W_{s,l,m}^{(2)}=\sum_{i=0}^m\frac{m!}{(m-i)!}\binom{-s}{i}\res_{\lambda}\lambda^{m+l-i}\pa^{m-i}_{\mu}|_{\mu=\lambda}X_2(\lambda,\mu).
\end{align}
It is easy to calculate
\begin{align*}
&W_{s,l,0}^{(1)}=W_{s,l,0}^{(2)}=\delta_{l,0},\quad W_{s,l,1}^{(1)}=p^{(1)}_{l}+s\delta_{l,0},\quad W_{s,l,1}^{(2)}=p^{(2)}_{l}-s\delta_{l,0},\\ &W_{s,l,2}^{(1)}=\sum_{i+j=l}:p^{(1)}_ip^{(1)}_j:-(l-2s+1)p^{(1)}_l+\frac{s(s-1)}{2}\delta_{l,0},\\
&W_{s,l,2}^{(2)}=\sum_{i+j=l}:p^{(2)}_ip^{(2)}_j:-(l+2s+1)p^{(2)}_l+\frac{s(s+1)}{2}\delta_{l,0}.
\end{align*}

\begin{cor}\label{cor:addtau}
The actions of additional symmetries  $\pa_{s^{(1)}_{m,l}}$ and $\pa_{s^{(2)}_{m,l}}$ on the tau functions are given as follows
\begin{align*}
&\pa_{s^{(1)}_{m,m+l}}\tau_0(s,\bs{x})=\frac{W_{s,l,m+1}^{(1)}}{m+1}\tau_0(s,\bs{x}),\quad
\pa_{s^{(1)}_{m,m+l}}\tau_1(s,\bs{x})=\frac{W_{s+1,l,m+1}^{(1)}}{m+1}\tau_1(s,\bs{x}),
\end{align*}
and
\begin{align*}
&\pa_{s^{(2)}_{m,m+l}}\tau_0(s,\bs{x})=\frac{W_{s-1,l,m+1}^{(2)}}{m+1}\tau_0(s,\bs{x}),\quad
\pa_{s^{(2)}_{m,m+l}}\tau_1(s,\bs{x})=\frac{W_{s-1,l,m+1}^{(2)}}{m+1}\tau_1(s,\bs{x}).
\end{align*}
\end{cor}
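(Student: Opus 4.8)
The plan is to prove Corollary~\ref{cor:addtau} by a direct coefficient extraction, comparing the two formal series that stand on the two sides of each identity in Proposition~\ref{pro:paalpha}. On the one hand the generator is, by definition, the double series $\pa^*_{\alpha_k}=\sum_{m\geq0}\frac{(\mu-\lambda)^m}{m!}\sum_{l\in\Z}\lambda^{-m-l-1}\pa_{s^{(k)}_{m,m+l}}$; on the other hand the vertex operator carries the expansion $\mathbb{X}_a(\lambda,\mu)=\sum_{m\geq0}\frac{(\mu-\lambda)^m}{m!}\sum_{l\in\Z}\lambda^{-m-l}W_{s,l,m}^{(a)}$. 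Thus the entire content of the corollary is to match the coefficients of the independent monomials $(\mu-\lambda)^m\lambda^{-m-l-1}$ on both sides, for every $m\geq0$ and $l\in\Z$.

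First I would treat the identity $\pa^*_{\alpha_1}\tau_0=\frac{1}{\mu-\lambda}\mathbb{X}_1\tau_0$. The prefactor $\frac{1}{\mu-\lambda}$ lowers the $(\mu-\lambda)$-degree by one, so the coefficient of $(\mu-\lambda)^m$ on the right is fed by the $(m+1)$-st term of $\mathbb{X}_1$; the combinatorial factor becomes $\frac{m!}{(m+1)!}=\frac{1}{m+1}$, while matching powers of $\lambda$ pairs the two $l$-summations term by term. Reading off the coefficient of $(\mu-\lambda)^m\lambda^{-m-l-1}$ then yields $\pa_{s^{(1)}_{m,m+l}}\tau_0=\frac{1}{m+1}W_{s,l,m+1}^{(1)}\tau_0$, which is the first claimed formula.

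The remaining three formulas follow the same scheme, the only extra ingredient being the interpretation of the prefactor $\frac{\mu}{\lambda}$. Since $\mathbb{X}_1(\lambda,\mu)=(\mu/\lambda)^sX_1(\lambda,\mu)$ and $\mathbb{X}_2(\lambda,\mu)=(\lambda/\mu)^sX_2(\lambda,\mu)$ by \eqref{X1-X2}, multiplication by $\frac{\mu}{\lambda}$ is exactly the shift $s\mapsto s+1$ for $\mathbb{X}_1$ and the shift $s\mapsto s-1$ for $\mathbb{X}_2$; at the level of the $W$-expansion, whose $s$-dependence sits entirely in the binomials $\binom{\pm s}{i}$, this replaces $W_{s,l,m}^{(a)}$ by $W_{s+1,l,m}^{(1)}$, respectively $W_{s-1,l,m}^{(2)}$. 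Applying this to $\pa^*_{\alpha_1}\tau_1=\frac{\mu}{\lambda}\frac{1}{\mu-\lambda}\mathbb{X}_1\tau_1$ produces the shift $s\to s+1$ and hence $\pa_{s^{(1)}_{m,m+l}}\tau_1=\frac{1}{m+1}W_{s+1,l,m+1}^{(1)}\tau_1$, while the two $a=2$ identities both carry the factor $\frac{\mu}{\lambda}$ and therefore both produce the uniform shift $s\to s-1$, giving $\pa_{s^{(2)}_{m,m+l}}\tau_0=\frac{1}{m+1}W_{s-1,l,m+1}^{(2)}\tau_0$ and $\pa_{s^{(2)}_{m,m+l}}\tau_1=\frac{1}{m+1}W_{s-1,l,m+1}^{(2)}\tau_1$.

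The computation itself is routine bookkeeping; the one point deserving care is that the prefactor $\frac{1}{\mu-\lambda}$ produces a spurious $(\mu-\lambda)^{-1}$ term on each right-hand side, coming from the $m=0$ (identity) term $W^{(a)}_{s,0,0}=1$ of the vertex operator. This singular piece has no counterpart in $\pa^*_{\alpha_k}$, whose expansion begins at $m=0$; it simply corresponds to the absent ``$m=-1$'' variable and is discarded. I expect this index-shift bookkeeping---keeping the $\frac{1}{\mu-\lambda}$ degree shift, with its factor $\frac{1}{m+1}$, and the $\frac{\mu}{\lambda}$ argument shift, with its $s\mapsto s\pm1$, straight at the same time---to be the only genuine source of potential error, the rest being a term-by-term comparison.
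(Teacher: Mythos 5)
Your proof is correct and is exactly the argument the paper intends: the corollary is read off from Proposition~\ref{pro:paalpha} by matching coefficients of $(\mu-\lambda)^m\lambda^{-m-l-1}$ in the expansions of $\pa^*_{\alpha_k}$ and $\frac{1}{\mu-\lambda}\mathbb{X}_a(\lambda,\mu)$, with the factor $\frac{\mu}{\lambda}$ interpreted via \eqref{X1-X2} as the parameter shifts $s\mapsto s+1$ (for $\mathbb{X}_1$) and $s\mapsto s-1$ (for $\mathbb{X}_2$) in $W^{(a)}_{s,l,m}$. Your explicit handling of the $\frac{1}{m+1}$ factor and of the spurious $(\mu-\lambda)^{-1}$ term (absent from $\pa^*_{\alpha_k}$ and hence discarded) is careful bookkeeping that the paper leaves implicit.
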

\section{Bigraded Toda hierarchy and virasoro symmetry}
Given two arbitrary integers $N$ and $M$, the $(N,M)$--bigraded modified Toda is derived by imposing the constraint $L_1^N=L_2^M$, $ (L_1^N+L_2^M)(1)=0$ on modified Toda Lax operators $(L_1,L_2)$ shown in \eqref{l1l2}, Denote $\mathcal{L}=L_1^N=L_2^M$ by  the Lax operator of  $(N,M)$--BMTH,  then it has the following form
\begin{align*}
&\mathcal{L}:=a_{N}\Lambda^{N}+a_{N-1}\Lambda^{N-1}+\cdots +a_{-M} \Lambda^{-M},\quad a_{N}\neq 0,\quad a_{-M}\neq 0\\
&\sum\limits_{i=-M}^{N}a_i=0,
\end{align*}
satisfying the following Lax equations,
\begin{eqnarray}
&&\frac{\pa \mathcal{L}}{\pa x^{(1)}_k}=\big[(L_1^{k} )_{\Delta,\geq1 },\mathcal{L}\big],\quad\frac{\pa \mathcal{L}}{\pa x^{(2)}_k}=\big[(L_2^{k})_{\Delta^*,\geq1},\mathcal{L}\big],\label{laxequation}
\end{eqnarray}
Notice that \eqref{laxequation} can also be written as \eqref{evo}  via the dressing operators.

For $l\geq-1$, introduce the following vector field $\pa_{s_l}$ as the linear combination
\begin{align}
\pa_{s_l}:=\frac{1}{N}\pa _{s^{(1)}_{1,Nl+1}}+\frac{1}{M}\pa _{s^{(2)}_{1,Ml+1}}\label{pasl}
\end{align}
of the additional symmetries \eqref{s1W} and \eqref{s2W} for the modified Toda hierarchy.
Namely,
\begin{eqnarray}
&&\frac{\pa W_1}{\pa s_l} =\Big(\frac{1}{M}(M_2L_2^{Ml+1} )_{\Delta^*,\geq1}-\frac{1}{N}(M_1L_1^{Nl+1} )_{\Delta,\leq0 } \Big)W_1,\label{W1sl}\\
  &&\frac{\pa W_2}{\pa s_l} = \Big(\frac{1}{N}(M_1L_1^{Nl+1} )_{\Delta,\geq1 }-\frac{1}{M}(M_2L_2^{Ml+1})_{\Delta^*,\leq0 }\Big)W_2. \label{W2sl}
\end{eqnarray}
It follows from \eqref{L1M1} and $L_1^N=L_2^M$ that  for $l\geq-1$
\begin{eqnarray*}
\Big[\frac{1}{N}M_1L_1^{Nl+1}-\frac{1}{M}M_2L_2^{Ml+1}, \mathcal{L}\Big] =0.
\end{eqnarray*}
Since the operator $\mathcal{L}$ can be written as $\mathcal{L}=W_1\Lambda^{N}W^{-1}_1=W_2\Lambda^{-M}W^{-1}_2$, then the equations \eqref{W1sl} and \eqref{W2sl} can be redefined by
\begin{align}
\frac{\pa \mathcal{L}}{\pa s_l} &=\Big[\frac{1}{M}(M_2L_2^{Ml+1} )_{\Delta^*,\geq1}-\frac{1}{N}(M_1L_1^{Nl+1} )_{\Delta,\leq0 },\mathcal{L}\Big],\nonumber\\
 &= \Big[\frac{1}{N}(M_1L_1^{Nl+1} )_{\Delta,\geq1 }-\frac{1}{M}(M_2L_2^{Ml+1})_{\Delta^*,\leq0 },\mathcal{L}\Big].\label{Lsl}
\end{align}
\begin{lem}
The Lax operator $\mathcal{L}$ remains the constraint condition $\mathcal{L}(1)=0$ under the vector fields $\pa_{s_l}(l\geq-1)$.
\end{lem}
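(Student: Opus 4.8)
The plan is to reduce the statement to the single identity $\bigl(\pa_{s_l}\mathcal{L}\bigr)(1)=0$. Since the constant function $1$ is independent of the variables $s_l$, one has $\pa_{s_l}\bigl(\mathcal{L}(1)\bigr)=\bigl(\pa_{s_l}\mathcal{L}\bigr)(1)$, so preserving $\mathcal{L}(1)=0$ is the same as showing that $\pa_{s_l}\mathcal{L}$ annihilates $1$. I would take the first expression for $\pa_{s_l}\mathcal{L}$ in \eqref{Lsl} and evaluate the commutator on $1$ term by term.

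The crucial elementary fact is that the positive truncations kill constants: a summand $\Delta^{j}$ with $j\geq1$ gives $\Delta^{j}(1)=(\Lambda-1)^{j}(1)=0$, and likewise $\Delta^{*j}(1)=0$ for $j\geq1$. Hence $(M_2L_2^{Ml+1})_{\Delta^*,\geq1}(1)=0$, and together with $\mathcal{L}(1)=0$ this forces $\bigl[(M_2L_2^{Ml+1})_{\Delta^*,\geq1},\mathcal{L}\bigr](1)=0$. For the remaining term I would write $(M_1L_1^{Nl+1})_{\Delta,\leq0}=M_1L_1^{Nl+1}-(M_1L_1^{Nl+1})_{\Delta,\geq1}$; the $\Delta^{\geq1}$ part again contributes nothing after commuting with $\mathcal{L}$ and evaluating at $1$, so only $\tfrac1N[M_1L_1^{Nl+1},\mathcal{L}](1)$ survives.

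What is then left is a genuine (finite) difference operator. Using $\mathcal{L}=L_1^{N}$ and the Heisenberg-type relation $[L_1,M_1]=1$ from \eqref{L1M1}, which yields $L_1^{N}M_1=M_1L_1^{N}+NL_1^{N-1}$, a short computation gives $[M_1L_1^{Nl+1},\mathcal{L}]=-N\mathcal{L}^{l+1}$; the parallel computation on the $L_2$ side, with $\mathcal{L}=L_2^{M}$ and $[L_2,M_2]=1$, gives $[M_2L_2^{Ml+1},\mathcal{L}]=-M\mathcal{L}^{l+1}$, which also reconfirms that the two presentations of $\pa_{s_l}\mathcal{L}$ in \eqref{Lsl} agree on $1$. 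Putting the pieces together, $\bigl(\pa_{s_l}\mathcal{L}\bigr)(1)=\mathcal{L}^{l+1}(1)$, and since $\mathcal{L}^{l+1}=\mathcal{L}^{l}\,\mathcal{L}$ for $l\geq0$ this equals $\mathcal{L}^{l}\bigl(\mathcal{L}(1)\bigr)=0$, which is the assertion.

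The main obstacle is one of rigor, not of strategy: the truncations $(M_1L_1^{Nl+1})_{\Delta,\leq0}$ and $(M_2L_2^{Ml+1})_{\Delta^*,\leq0}$ are not honest difference operators, since they carry infinitely many negative powers, so their value on $1$ is not a convergent coefficient sum and the commutator cannot be split carelessly. The decomposition above is precisely what circumvents this: the complementary pieces $(\,\cdot\,)_{\Delta,\geq1}$ and $(\,\cdot\,)_{\Delta^*,\geq1}$ are finite operators annihilating $1$, and $[M_1L_1^{Nl+1},\mathcal{L}]=-N\mathcal{L}^{l+1}$ is itself finite, so every object that is finally evaluated at $1$ is a finite sum; I would make explicit that the infinite tails of $M_1L_1^{Nl+1}\mathcal{L}$ and $\mathcal{L}M_1L_1^{Nl+1}$ cancel in the commutator in order to legitimise the manipulation. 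The one case that deserves separate inspection is the bottom generator $l=-1$, where $\mathcal{L}^{l+1}=\id$ falls outside the telescoping $\mathcal{L}^{l+1}(1)=\mathcal{L}^{l}\bigl(\mathcal{L}(1)\bigr)$, so the vanishing of $\mathcal{L}^{0}(1)$ there must be argued directly from the defining constraint rather than obtained from $\mathcal{L}(1)=0$.
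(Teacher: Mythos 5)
Your argument for $l\geq0$ is, in substance, exactly the paper's own proof. The paper likewise rewrites $-(M_1L_1^{Nl+1})_{\Delta,\leq0}=(M_1L_1^{Nl+1})_{\Delta,\geq1}-M_1L_1^{Nl+1}$, uses $[M_1L_1^{Nl+1},\mathcal{L}]=-N\mathcal{L}^{l+1}$ (a consequence of $[L_1,M_1]=1$ and $\mathcal{L}=L_1^N$), and arrives at
\begin{align*}
\frac{\pa \mathcal{L}}{\pa s_l}=\mathcal{L}^{l+1}+\Big[\frac{1}{M}\big(M_2L_2^{Ml+1}\big)_{\Delta^*,\geq1}+\frac{1}{N}\big(M_1L_1^{Nl+1}\big)_{\Delta,\geq1},\mathcal{L}\Big],
\end{align*}
which it then evaluates at $1$ using $A_{\Delta,\geq1}(1)=A_{\Delta^*,\geq1}(1)=0$ and $\mathcal{L}(1)=0$; this is precisely your decomposition and your finiteness remark. (One cosmetic slip: what survives in your second paragraph is $-\tfrac{1}{N}[M_1L_1^{Nl+1},\mathcal{L}](1)$, not $+\tfrac{1}{N}$; your final identity $(\pa_{s_l}\mathcal{L})(1)=\mathcal{L}^{l+1}(1)$ carries the correct sign.)

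The real divergence is the case $l=-1$, and there your diagnosis is right but your proposed remedy cannot work. $\mathcal{L}^{0}$ is the identity operator, so $\mathcal{L}^{0}(1)=1$ identically, independently of any constraint imposed on $\mathcal{L}$; there is nothing to "argue directly from the defining constraint". Consequently $(\pa_{s_{-1}}\mathcal{L})(1)=1\neq0$: with the definitions \eqref{W1sl}--\eqref{Lsl} the flow $\pa_{s_{-1}}$ does \emph{not} preserve $\mathcal{L}(1)=0$, so the lemma as stated (for all $l\geq-1$) fails at the bottom generator. The paper's proof hides this by tacitly assuming $\mathcal{L}^{l+1}(1)=0$ uniformly in $l$, but its own $(1,1)$--BMTH example exhibits the failure: there $\pa_{s_{-1}}a_i$ contains the inhomogeneous term $\delta_{i,0}$, whence $\pa_{s_{-1}}\big(\sum_i a_i\big)=1$ on the constraint surface $\sum_i a_i=0$. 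So the accurate conclusion is that the lemma and both arguments are valid precisely for $l\geq0$, while at $l=-1$ one must either exclude that flow from the statement or modify its definition (e.g.\ subtract the constant $\mathcal{L}^{0}$ term). Your write-up, completed for $l\geq0$ and reporting $l=-1$ as a genuine counterexample rather than as a step still to be filled in, would in fact be more correct than the paper's proof.
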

\begin{proof}
Indeed, the vector field $\pa_{s_l}$ acting on $\mathcal{L}$ reads
\begin{align*}
\frac{\pa \mathcal{L}}{\pa s_l} &=\mathcal{L}^{l+1}+\Big[\frac{1}{M}(M_2L_2^{Ml+1} )_{\Delta^*,\geq1}+\frac{1}{N}(M_1L_1^{Nl+1} )_{\Delta,\geq1 },\mathcal{L}\Big],
\end{align*}
which implies $\frac{\pa \mathcal{L}}{\pa s_l}(1)=0$ by using  $\mathcal{L}(1)=0$ and $A_{\Delta^*,\geq1}(1)=A_{\Delta,\geq1}(1)=0$ for arbitrary $A\in\mathcal{F}[[\Lambda,\Lambda^{-1}]] $.
\end{proof}
\begin{prp}\label{pro:bmth}
The vector fields $\pa_{s_l}(l\geq-1)$ give symmetries  of the $(N,M)$--BMTH hierarchy, namely,
for any  $a=1,2$, $k\geq1$ and $l\geq-1$, one has
\begin{align*}
\left[\pa_{s_{l}},\pa_{x^{(a)}_{k}}\right]=0.
\end{align*}
Further, the vector fields satisfy the virasoro commutation relations
\begin{align*}
\left[\pa_{s_m},\pa_{s_n}\right]=(n-m)\pa_{s_{m+n}}.
\end{align*}
\end{prp}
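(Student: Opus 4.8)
The first identity is immediate. By \eqref{pasl}, $\pa_{s_l}$ is the fixed linear combination $\tfrac1N\pa_{s^{(1)}_{1,Nl+1}}+\tfrac1M\pa_{s^{(2)}_{1,Ml+1}}$ of additional symmetries of the (unreduced) modified Toda hierarchy, so bilinearity of the bracket together with \eqref{add-time} of Proposition \ref{pro:comm} gives $\big[\pa_{s_l},\pa_{x^{(a)}_k}\big]=\tfrac1N\big[\pa_{s^{(1)}_{1,Nl+1}},\pa_{x^{(a)}_k}\big]+\tfrac1M\big[\pa_{s^{(2)}_{1,Ml+1}},\pa_{x^{(a)}_k}\big]=0$. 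The preceding lemma shows that $\pa_{s_l}$ is tangent to the reduction locus $L_1^N=L_2^M$, $\mathcal{L}(1)=0$, so this identity descends to the $(N,M)$--BMTH.

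For the Virasoro relations my plan is to evaluate $[\pa_{s_m},\pa_{s_n}]$ on the full modified Toda hierarchy, where every generator entering \eqref{pasl} is defined, and only afterwards restrict. Expanding bilinearly, the two mixed brackets vanish by \eqref{s1s2}, leaving $[\pa_{s_m},\pa_{s_n}]=\tfrac1{N^2}\big[\pa_{s^{(1)}_{1,Nm+1}},\pa_{s^{(1)}_{1,Nn+1}}\big]+\tfrac1{M^2}\big[\pa_{s^{(2)}_{1,Mm+1}},\pa_{s^{(2)}_{1,Mn+1}}\big]$. By \eqref{s1s1} each like--type bracket is again an additional symmetry, and it remains to pin down its coefficients.

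The decisive point is that these brackets stay within the spin--$2$ family. The generator $\pa_{s^{(a)}_{1,p}}$ corresponds to the Orlov--Schulman operator $M_aL_a^{p}$, and $[L_a,M_a]=1$ from \eqref{L1M1} gives $L_a^{p}M_a=M_aL_a^{p}+pL_a^{p-1}$, whence $\big[M_aL_a^{p},M_aL_a^{q}\big]=(p-q)M_aL_a^{p+q-1}$ with no lower--order $W_\infty$ terms; this closure is exactly why one obtains a Virasoro rather than a full $W_\infty$ algebra. Since the assignment $A\mapsto\pa_A$ of additional symmetries is a Lie algebra anti-homomorphism, this yields $\big[\pa_{s^{(a)}_{1,p}},\pa_{s^{(a)}_{1,q}}\big]=(q-p)\pa_{s^{(a)}_{1,p+q-1}}$. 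Taking $p=Nm+1,\ q=Nn+1$ gives $q-p=N(n-m)$ and $p+q-1=N(m+n)+1$, and analogously with $M$ for $a=2$; substituting, the factors $\tfrac1{N^2}N(n-m)$ and $\tfrac1{M^2}M(n-m)$ reassemble into $(n-m)\big(\tfrac1N\pa_{s^{(1)}_{1,N(m+n)+1}}+\tfrac1M\pa_{s^{(2)}_{1,M(m+n)+1}}\big)=(n-m)\pa_{s_{m+n}}$. As $\pa_{s_{m+n}}$ is tangent to the reduction, the identity descends to the $(N,M)$--BMTH.

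Because this is an identity among genuine vector fields built from the honest Lie algebra of Orlov--Schulman operators, no anomaly can appear, which accounts for the vanishing central charge. The same relations can also be checked explicitly on tau functions: by Corollary \ref{cor:addtau}, $\pa_{s_l}$ acts on $\tau_0$ through the quadratic operator $\tfrac1{2N}W^{(1)}_{s,Nl,2}+\tfrac1{2M}W^{(2)}_{s-1,Ml,2}$, the order--reversal in commuting two such flows producing precisely the sign $n-m$, and the absence of a bosonic central extension is forced by the vector--field computation. The step I expect to be most delicate is justifying the anti-homomorphism property, with its precise sign, for the combined projected flow -- which simultaneously mixes $(\cdot)_{\Delta,\leq0}$ on $W_1$ with $(\cdot)_{\Delta^*,\geq1}$ on $W_2$ -- together with confirming that the particular normalizations $\tfrac1N,\tfrac1M$ and the reindexings $Nl,Ml$ are exactly what is needed for the two Virasoro families to lock into a single one.
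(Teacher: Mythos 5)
Your proposal is correct and follows essentially the same route as the paper's proof: the first assertion from Proposition \ref{pro:comm} (i.e.\ \eqref{add-time}), and the Virasoro relation by bilinear expansion of $[\pa_{s_m},\pa_{s_n}]$, vanishing of the mixed brackets via \eqref{s1s2}, and evaluation of the like-type brackets, which reassemble into $(n-m)\pa_{s_{m+n}}$ exactly as in the paper's displayed computation. The only difference is one of detail: where the paper simply quotes the $W_{\infty}$ structure constant $C^{p,q}_{1,m+1,1,n+1}$, you re-derive it from $[L_a,M_a]=1$ and the anti-homomorphism property, and your value $(n-m)\delta_{p,1}\delta_{q,m+n+1}$ (closure on $\pa_{s^{(a)}_{1,m+n+1}}$) is the correct one --- the delta indices in the paper's quoted formula, $(n-m)\delta_{p,m+n-1}\delta_{q,1}$, are a typo, although the paper's subsequent display and conclusion agree with yours.
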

\begin{proof}
Firstly the first assertion is a consequence of Proposition \ref{pro:comm}. For the second assertion, it follows from \eqref{s1s1}, \eqref{s1s2} and \eqref{pasl},
\begin{align*}
&\left[\pa_{s_m},\pa_{s_n}\right]=\left[\frac{1}{N}\pa _{s^{(1)}_{1,Nm+1}}+\frac{1}{M}\pa _{s^{(2)}_{1,Mm+1}},\frac{1}{N}\pa _{s^{(1)}_{1,Nn+1}}+\frac{1}{M}\pa _{s^{(2)}_{1,Mn+1}}\right]\\
&=\frac{1}{N}(n-m)\pa _{s^{(1)}_{1,N(m+n)+1}}+ \frac{1}{M}(n-m)\pa _{s^{(2)}_{1,M(m+n)+1}}\\
&=(n-m)\pa_{s_{m+n}}.
\end{align*}
where we have used $C^{p,q}_{1,m+1,1,n+1}=(n-m)\delta_{p,m+n-1}\delta_{q,1}.$
\end{proof}
\noindent {\bf Example:}
Here are some examples of the additional symmetries of the $(1,1)$--BMTH hierarchy, that is
\begin{align*}
\mathcal{L}:=a_{1}\Lambda+a_{0} +a_{-1} \Lambda^{-1},\quad \text{with $a_{-1}+a_0+a_1=0$}.
\end{align*}
Note that by Lemma \ref{lemma:relation}, we can find that $(W_2s\Lambda W^{-1}_2)_{\Delta^*,\geq1}=(W_2s W^{-1}_2)_{\Delta^*,\geq1}=0$, $(W_1s\Lambda^{-1}W^{-1}_1)_{\Delta,\leq0 }=W_1s\Lambda^{-1}W^{-1}_1$ and $(W_1sW^{-1}_1)_{\Delta,\leq0 }=W_1sW^{-1}_1$, then it follows from \eqref{M1},  \eqref{M2} and
    \eqref{Lsl} that
  \begin{align*}
&\frac{\pa \mathcal{L}}{\pa s_{-1}}
=1+\sum_{k=1}^{\infty}kx_k^{(1)}\frac{\pa \mathcal{L}}{\pa x^{(1)}_{k-1}}+\sum_{k=1}^{\infty}kx_k^{(2)}\frac{\pa \mathcal{L}}{\pa x^{(2)}_{k-1}},\\
&\frac{\pa \mathcal{L}}{\pa s_{0}}=
\mathcal{L}+\sum_{k=1}^{\infty}kx_k^{(1)}\frac{\pa \mathcal{L}}{\pa x^{(1)}_k}+\sum_{k=1}^{\infty}kx_k^{(2)}\frac{\pa \mathcal{L}}{\pa x^{(2)}_k}.
\end{align*}
  which leads to
  \begin{align*}
&\frac{\pa a_i}{\pa s_{-1}}=\sum_{k=1}^{\infty}kx_k^{(1)}\frac{\pa a_i}{\pa x^{(1)}_{k-1}}+\sum_{k=1}^{\infty}kx_k^{(2)}\frac{\pa a_i}{\pa x^{(2)}_{k-1}}+\delta_{i,0},\quad i=-1,0,1.\\
&\frac{\pa a_i}{\pa s_{-1}}=\sum_{k=1}^{\infty}kx_k^{(1)}\frac{\pa a_i}{\pa x^{(1)}_{k}}+\sum_{k=1}^{\infty}kx_k^{(2)}\frac{\pa a_i}{\pa x^{(2)}_{k}}+a_i,\quad i=-1,0,1.
\end{align*}
\begin{cor}\label{cor:tau}
With tau functions $\tau_0, \tau_1$ reduced from that of the modified Toda hierarchy, the
 actions of above symmetries on the tau functions can be written as
 \begin{align*}
 &\pa_{s_l}\tau_0(s,\bs{x})=\frac{1}{2}\Big(\frac{1}{N}W_{s,Nl,2}^{(1)}+\frac{1}{M}W_{s-1,Ml,2}^{(2)}\Big)\tau_0(s,\bs{x}),\\
  &\pa_{s_l}\tau_1(s,\bs{x})=\frac{1}{2}\Big(\frac{1}{N}W_{s+1,Nl,2}^{(1)}+\frac{1}{M}W_{s-1,Ml,2}^{(2)}\Big)\tau_1(s,\bs{x}).
 \end{align*}
\end{cor}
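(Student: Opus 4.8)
The plan is to obtain Corollary \ref{cor:tau} by direct substitution into Corollary \ref{cor:addtau}, exploiting that $\pa_{s_l}$ is by definition \eqref{pasl} a fixed linear combination of the modified-Toda additional symmetries. First I would specialize the two families of formulas in Corollary \ref{cor:addtau} to $m=1$, since both summands in \eqref{pasl} carry first index $m=1$. The only bookkeeping is matching the second index: writing the label of $\pa_{s^{(1)}_{1,Nl+1}}$ in the form $s^{(1)}_{m,m+l'}$ with $m=1$ forces $m+l'=Nl+1$, i.e. $l'=Nl$; likewise $\pa_{s^{(2)}_{1,Ml+1}}$ corresponds to $l'=Ml$ in the second family.

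With these identifications, Corollary \ref{cor:addtau} immediately gives $\pa_{s^{(1)}_{1,Nl+1}}\tau_0=\tfrac12 W^{(1)}_{s,Nl,2}\tau_0$ and $\pa_{s^{(1)}_{1,Nl+1}}\tau_1=\tfrac12 W^{(1)}_{s+1,Nl,2}\tau_1$, and similarly $\pa_{s^{(2)}_{1,Ml+1}}\tau_0=\tfrac12 W^{(2)}_{s-1,Ml,2}\tau_0$ together with $\pa_{s^{(2)}_{1,Ml+1}}\tau_1=\tfrac12 W^{(2)}_{s-1,Ml,2}\tau_1$. Multiplying the first pair by $1/N$, the second by $1/M$, and adding according to \eqref{pasl} reproduces the two claimed formulas verbatim. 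Note in particular that the asymmetry $W^{(1)}_{s,\cdot}$ versus $W^{(1)}_{s+1,\cdot}$ between the $\tau_0$ and $\tau_1$ lines is inherited directly from the shift already present in Corollary \ref{cor:addtau}, while the $W^{(2)}$ piece carries the same label $s-1$ on both lines for the same reason.

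The single point that genuinely requires justification, and the step I expect to be the main obstacle, is the legitimacy of applying Corollary \ref{cor:addtau} after the reduction $L_1^N=L_2^M$. The individual vector fields $\pa_{s^{(a)}_{1,\cdot}}$ do not preserve the bigraded constraint, so the formulas of Corollary \ref{cor:addtau} hold a priori only for the unreduced modified-Toda tau functions. To close the argument I would invoke the preceding Lemma together with Proposition \ref{pro:bmth}: the particular combination \eqref{pasl} is precisely the one shown to preserve $\mathcal{L}(1)=0$ and to commute with the time flows, hence it descends to a well-defined flow of the $(N,M)$--BMTH. Since each $W^{(a)}_{s,l,2}$ is an explicit differential operator in the $\bs{x}$-variables, as recorded in the list following the expansion of $\mathbb{X}_a$, the action on the reduced tau functions is simply the corresponding linear combination of those operators, and no fresh computation is needed; the reduced $\tau_0,\tau_1$ are, by construction, the restrictions of the modified-Toda tau functions to the constrained locus, so the operator identities transfer unchanged.
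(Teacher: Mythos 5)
Your proposal is correct and is essentially the paper's (implicit) argument: Corollary \ref{cor:tau} follows by setting $m=1$ in Corollary \ref{cor:addtau}, matching the second index via $1+l'=Nl+1$ (resp. $Ml+1$) so that $l'=Nl$ (resp. $Ml$), and taking the linear combination prescribed by \eqref{pasl}; the phrase ``reduced from that of the modified Toda hierarchy'' in the statement plays exactly the role you assign it, namely that the reduced $\tau_0,\tau_1$ are particular modified-Toda tau functions, so the operator formulas of Corollary \ref{cor:addtau} apply verbatim and the constraint-preservation issue is handled by the preceding Lemma and Proposition \ref{pro:bmth}, just as you argue.
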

\section{Conclusions and Discussions}
In this section, we summarize the results presented above.  we have successfully  constructed the additional symmetries $\pa_{s^{(a)}_{m,l}}$ in \eqref{s1W} and \eqref{s2W} for the modified Toda hierarchy, where  $a=1,2, m\geq0$ and $l\in\Z$. Furthermore it has been shown  that these additional flows form a centerless $W_{\infty}\times W_{\infty}$ algebra in Proposition \ref{pro:comm}. Building on this, we also derive the corresponding AsvM formula in Theorem \ref{thm:ASvM}, which links the additional symmetries on the wave functions and the actions of vertex operators  $\mathbb{X}_a(\lambda,\mu)(a=1,2)$ in \eqref{X1-X2} on tau functions $\tau_0$ and $\tau_1$. It should be noted that the existence of two tau functions brings
much difficulty compared with the Toda case, which has only a single tau function.
The Virasoro additional symmetries of the  $(N,M)$--BMTH are defined in \eqref{W1sl} and \eqref{W2sl}, whose commutation relations are presented in Proposition \ref{pro:bmth} and its actions on the tau functions  are showed in Corollary \ref{cor:tau}.
 Notably, the additional symmetries are pivotal  not only in string theory but also in the construction of constraint-type \cite{Cheng-cKP} and extended \cite{Liu-extended} integrable systems.
  We expect that
  the results presented in this paper will be helpful for further studies of the modified Toda system, as well as for the construction of other types of integrable systems related to the modified Toda hierarchy.

\end{document}